\providecommand{\tabularnewline}{\\}
\numberwithin{equation}{section}
\numberwithin{figure}{section}
\theoremstyle{plain}
\newtheorem{thm}{\protect\theoremname}[section]
\theoremstyle{remark}
\newtheorem{rem}[thm]{\protect\remarkname}
\theoremstyle{definition}
\newtheorem{defn}[thm]{\protect\definitionname}
\theoremstyle{plain}
\newtheorem{lem}[thm]{\protect\lemmaname}
\theoremstyle{plain}
\newtheorem{cor}[thm]{\protect\corollaryname}
\theoremstyle{remark}
\newtheorem{claim}[thm]{\protect\claimname}
\DeclareMathOperator{\mmc}{mmc}
\DeclareMathOperator{\poly}{poly}
\DeclareMathOperator{\polylog}{polylog}
\let\oldnl\nl
\newcommand{\nonl}{\renewcommand{\nl}{\let\nl\oldnl}}
\providecommand{\claimname}{Claim}
\providecommand{\corollaryname}{Corollary}
\providecommand{\definitionname}{Definition}
\providecommand{\lemmaname}{Lemma}
\providecommand{\remarkname}{Remark}
\providecommand{\theoremname}{Theorem}
\begin{document}
\global\long\def\N{\mathbb{N}}%
\global\long\def\R{\mathbb{R}}%
\global\long\def\uni{\cup}%
\global\long\def\G{G=\left(V,E\right)}%
\global\long\def\Gw{G=\left(V,E,w\right)}%
\global\long\def\eps{\mathbb{{\cal \varepsilon}}}%
\global\long\def\delt{\mathbb{{\cal \delta}}}%
\global\long\def\sseq{\subseteq}%
\global\long\def\O#1{O\text{\negmedspace}\left(#1\right)}%
\global\long\def\letdef{\leftarrow}%
\global\long\def\Ot#1{\widetilde{O}\text{\negmedspace}\left(#1\right)}%
\global\long\def\Oe#1{O_{\eps}\text{\negmedspace}\left(#1\right)}%
\global\long\def\Oet#1{\widetilde{O}_{\eps}\text{\negmedspace}\left(#1\right)}%
\global\long\def\o#1{o(#1)}%
\global\long\def\D{\mathcal{D}}%
\global\long\def\S{\mathcal{S}}%
\global\long\def\A{\mathcal{A}}%
\global\long\def\lsw{\Lambda^{\negthinspace sw}}%
\makeatletter
\def \@fnsymbol#1{\ensuremath{\ifcase#1 \or \dagger \or \ddagger \or * \or \textsection \else\@ctrerr\fi}}
\makeatother
\title{Almost-Smooth Histograms and Sliding-Window Graph Algorithms\thanks{A preprint of the paper is available at \protect\href{http://arxiv.org/abs/1904.07957}{arXiv:1904.07957}}}
\author{Robert Krauthgamer\thanks{Weizmann Institute of Science. Work partially supported by ONR Award
N00014-18-1-2364, the Israel Science Foundation grant \#1086/18, and
a Minerva Foundation grant. Part of this work was done while the author
was visiting the Simons Institute for the Theory of Computing. Email:
$\texttt{robert.krauthgamer@weizmann.ac.il}$, $\texttt{david.reitblat@gmail.com}$} \and David Reitblat$^{\ddagger}$\thanks{Currently at Playtika AI Research Lab. Email: $\texttt{davidre@playtika.com}$}}
\maketitle
\begin{abstract}
We study algorithms for the sliding-window model, an important variant
of the data-stream model, in which the goal is to compute some function
of a fixed-length suffix of the stream. We extend the $\emph{smooth-histogram}$
framework of Braverman and Ostrovsky (FOCS 2007) to almost-smooth
functions, which includes all subadditive functions. Specifically,
we show that if a subadditive function can be $\left(1+\eps\right)$-approximated
in the $\emph{insertion-only}$ streaming model, then it can be $\left(2+\eps\right)$-approximated
also in the $\emph{sliding-window}$ model with space complexity larger
by factor $\O{\eps^{-1}\log w}$, where $w$ is the window size.

We demonstrate how our framework yields new approximation algorithms
with relatively little effort for a variety of problems that do not
admit the smooth-histogram technique. For example, in the frequency-vector
model, a symmetric norm is subadditive and thus we obtain a sliding-window
$\left(2+\eps\right)$-approximation algorithm for it. Another example
is for streaming matrices, where we derive a new sliding-window $\left(\sqrt{2}+\eps\right)$-approximation
algorithm for Schatten $4$-norm. We then consider graph streams and
show that many graph problems are subadditive, including maximum submodular
matching, minimum vertex-cover, and maximum $k$-cover, thereby deriving
sliding-window $\O 1$-approximation algorithms for them almost for
free (using known insertion-only algorithms). Finally, we design for
every $d\in\left(1,2\right]$ an artificial function, based on the
maximum-matching size, whose almost-smoothness parameter is exactly
$d$.
\end{abstract}
\newpage{}

\section{Introduction}

Nowadays, there is a growing need for algorithms to process huge data
sets. The Internet, including social networks and electronic commerce,
as well as astronomical and biological data, provide new challenges
for computer scientists and mathematicians, since traditional algorithms
are not able to handle such massive data sets in a reasonable time.
First, the data is too big to be stored on a single machine. Second,
even algorithms with time complexity $\O{n^{2}}$ could be too slow
in practice. Third, and most important, the data could change over
time, and algorithms should cope with these dynamic changes. Therefore,
several models of computation over Big Data are studied, such as parallel,
distributed, and streaming algorithms.

We concentrate on the $\emph{streaming model}$ (see e.g. \cite{Muthukrishnan:2005:DSA:1166409.1166410,Babcock:2002:MID:543613.543615,Aggarwalbook}),
where the data is given as a sequence of items (or updates) in some
order (usually adversarial), and the algorithm can read the data only
in that order. Often, the algorithm can only read the data once, although
there are also algorithms for multiple passes. More concretely, a
$\emph{stream}$ is a (possibly infinite) sequence $\S=\left\langle \sigma_{1},\sigma_{2},\ldots,\sigma_{i},\ldots\right\rangle $,
where each item $\sigma_{i}$ belongs to some universe $U$. The length
of the stream, as well as the size of $U$, is assumed to be huge,
such that storing the entire stream, or even a constant-size information
for each item in $\S$, is impractical. A streaming algorithm $\A$
takes $\S$ as input and computes some function $f$ of the stream
$\S$. This means that the algorithm has access to the input in a
$\emph{streaming fashion}$, i.e., $\A$ can read the input once and
only in the order it is given and at every time $t$ the algorithm
may be asked to evaluate $f$ on the prefix $\S_{t}=\left\langle \sigma_{1},\ldots,\sigma_{t}\right\rangle $,
called a $\emph{query}$ at time $t$ for $f\left(\S_{t}\right)$.
We only consider here the $\emph{insertion-only}$ model, where all
updates are positive, i.e., only adding items to the underlying structure
(in some other models, the deletion of previously added items is also
allowed).

In many streaming scenarios, computing the exact value of $f$ is
computationally prohibitive or even impossible. Hence, the goal is
to design a streaming algorithm whose output approximates $f$ (often
with high probability). As usual, it should have low space complexity,
update time, and query time, see Remark \ref{remark_1.1}.

The $\emph{sliding-window}$ model, introduced by Datar, Gionis, Indyk
and Motwani \cite{StreamStatistics}, has become a popular model for
processing (infinite) data streams, where older data items should
be ignored, as they are considered obsolete. In this model, the goal
is to compute a function $f$ on a suffix of the stream, referred
to as the $\emph{active window}$ $W$. Throughout, the size $w$
of the active window $W$ is assumed to be known (to the algorithm)
in advance. At a point in time $t$, we denote the active window by
$W_{t}=\left\langle \sigma_{t-w+1},\ldots,\sigma_{t}\right\rangle $,
or $W$ for short when $t$ is clear from the context. The goal is
to approximate $f\left(W_{t}\right)$, and possibly provide a corresponding
object, e.g., a feasible matching in a graph when the stream is a
sequence of edges and $f$ is the maximum-matching size. For a randomized
algorithm, we require that a single query at any time $t$ succeeds
with probability at least $1-\delt$. 

Datar et al. \cite{StreamStatistics} noted that in the sliding-window
model there is a lower bound of $\Omega\left(w\right)$ if deletions
are allowed, even for relatively simple tasks like approximating (within
factor $2$) the number of distinct items in a stream. Therefore,
we assume throughout that the stream $\S$ has only insertions, and
no deletions.

A widely studied streaming model is the $\emph{graph-streaming}$
model (see e.g. \cite{Feigenbaum:2005:GPS:1132633.1132638,McGregor:2014:GSA:2627692.2627694}),
where the stream $\S$ consists of a sequence of edges (possibly with
some auxiliary information, like weights) of an underlying graph $\G$.\footnote{All our definitions, e.g., Definitions \ref{def:matching} and \ref{def:VC},
as well as Corollary \ref{cor:MM=000026VC2AS}, extend naturally to
hypergraphs.} We assume that $V=\left[n\right]$ for a known value $n\in\N$ and
$G$ is a simple graph without parallel edges. This graph-streaming
model is sometimes studied in the $\emph{semi-streaming}$ model,
where algorithms are allowed to use $\O{n\cdot\polylog\left(n\right)}$
space. Observe that for dense graphs, an algorithm in this model cannot
store the whole graph, but it can store $\polylog\left(n\right)$
information for each vertex. We slightly abuse the notation of a graph
function $f\left(G\right)$ and extend it to a stream of edges $\S$
using the convention $f\left(\S\right)\coloneqq f\left(G\right)$
where $G$ is the graph defined by the edges in the stream $\S$.
\begin{rem}
\label{remark_1.1}Throughout, $\emph{space complexity}$ refers to
the storage requirement of an algorithm during the entire input stream,
measured in bits. $\emph{Update time}$ refers to the time complexity
of processing a single update from the stream in the RAM model. $\emph{Query time}$
refers to the time complexity of reporting an output at a single point
in time.
\end{rem}

Crouch, McGregor and Stubbs \cite{Crouch2013} initiated the study
of graph problems in the sliding-window model, and designed algorithms
for several basic graph problems, such as $k$-connectivity, bipartiteness,
sparsification, minimum spanning tree and spanners. They also showed
approximation algorithms for maximum-matching and for maximum-weight
matching. We shall focus on two well-known and closely related optimization
problems, maximum-matching and minimum vertex-cover, defined below.

\subsection{Basic Terminology}
\begin{defn}
\label{def:matching}A $\emph{matching}$ in a graph $\G$ is a set
of edges $M\sseq E$ that are disjoint, i.e., no two edges have a
common vertex. Denote by $m\left(G\right)$ the maximum size of a
matching in $G$. A matching of maximal size (number of edges) is
called a $\emph{maximum-cardinality matching}$, and is usually referred
to as a $\emph{maximum-matching}$. In an edge-weighted graph $G$,
a $\emph{maximum-}$ $\emph{weight matching}$ is a matching with
maximal sum of weights.
\end{defn}

\begin{defn}
\label{def:VC}A subset $C\sseq V$ of the vertices of a graph $\G$
is called a $\emph{vertex-cover}$ of $G$ if each edge $e\in E$
is incident to at least one vertex in $C$. Denote by $VC\left(G\right)$
the smallest size of a vertex-cover of $G$.
\end{defn}

We will use the terminology of Feige and Jozeph \cite{Feige:2015:SEA:2688073.2688101}
to distinguish between estimation and approximation of optimization
problems (where the goal is to find a feasible solution of optimal
value). An $\emph{approximation algorithm}$ is required to output
a feasible solution whose value is close to the value of an optimal
solution, e.g., output a feasible matching of near-optimal size. An
$\emph{estimation algorithm}$ is required to only output a value
close to that of an optimal solution, without necessarily outputting
a corresponding feasible solution, e.g., estimate the size of a maximum-matching,
without providing a corresponding matching.
\begin{defn}
\label{def:Notation}The notation $\Ot s$ hides polylogarithmic dependence
on $s$, i.e., $\Ot s=\O{s\cdot\polylog\left(s\right)}$. To suppress
dependence on $\eps$ we write $\Oe s=\O{s\cdot f\left(\eps\right)}$,
where $f:\R^{+}\rightarrow\R^{+}$ is some positive function.\footnote{Throughout, every dependence on $\eps$ is polynomial, i.e., in our
case $\Oe s=\O{s\cdot\poly\left(\eps^{-1}\right)}$.} We also combine both notations and define $\Oet s=\Oe{s\cdot\polylog\left(s\right)}$.
\end{defn}

\begin{defn}
\label{def:rand_alg}For $\delt\in\left[0,1\right)$ and $C\ge1$,
a randomized algorithm $\Lambda$ is said to $\left(C,\delt\right)$-approximate
a function $f$ if on every input stream $\S$, its output $\Lambda\left(\S\right)$,
upon a query (at a single point of time), satisfies
\[
\Pr\left[f\left(\S\right)\le\Lambda\left(\S\right)\le C\cdot f\left(\S\right)\right]\ge1-\delt.
\]
If $\Lambda$ is a deterministic algorithm then $\delt=0$, and we
say in short that it $C$-approximates $f$. We sometime use this
shorter terminology and omit $\delt$ also for randomized approximation
algorithms, when $\delt$ is a fixed constant, say $0.1$. It is well-known
that every algorithm (that reports a real value) with success probability
$0.9$ can be amplified to success probability $1-\frac{1}{\poly\left(n\right)}$
using $\O{\log\left(n\right)}$ independent repetitions and reporting
the median.
\end{defn}

\begin{rem}
\label{rem:rand_alg}Usually we use Definition \ref{def:rand_alg}
with an extra approximation factor $1\pm\eps$, for $\eps\in\left(0,\frac{1}{2}\right)$,
in the following manner. If algorithm $\Lambda$ satisfies
\[
\Pr\left[\left(1-\eps\right)f\left(\S\right)\le\Lambda\left(\S\right)\le\left(1+\eps\right)C\cdot f\left(\S\right)\right]\ge1-\delt,
\]
then algorithm $\Lambda^{\prime}\coloneqq\frac{1}{1-\eps}\Lambda$
satisfies Definition \ref{def:rand_alg} with $C^{\prime}\coloneqq\frac{\left(1+\eps\right)}{\left(1-\eps\right)}C<\left(1+4\eps\right)C$.
\end{rem}

\subsection{Our Contribution}

We introduce an adaptation of the smooth-histogram technique of Braverman
and Ostrovsky \cite{Braverman:2007:SHS:1333875.1334202} to a more
general family of functions, that we call almost-smooth, and demonstrate
that it is applicable in a variety of settings, including frequency-vector
streams, graph streams, and matrix streams. In fact, many of our examples
follow a single reasoning --- these functions are subadditive (defined
below) and thus $2$-almost-smooth --- and some of them (e.g., symmetric
norms and maximum-matching) are not smooth and thus do not admit the
more restricted smooth-histogram technique. Furthermore, we show artificial
examples where the almost-smoothness parameter can take any value
in the range $\left(1,2\right]$.

Similarly to \cite{Braverman:2007:SHS:1333875.1334202}, the main
idea in our framework is to maintain several instances of an insertion-only
algorithm on different suffixes of the stream, such that at every
point in time, the active window $W$ and the largest suffix of it
that is maintained have similar value of $f$. We show that for an
almost-smooth $f$ our overall algorithm achieves a good approximation
of $f\left(W\right)$. Using this technique we obtain new sliding-window
algorithms for several problems that admit an insertion-only streaming
algorithm like estimating a symmetric norm of a frequency-vector or
submodular matching in a graph stream. For more details on the smooth-histogram
framework of \cite{Braverman:2007:SHS:1333875.1334202} see Appendix
\ref{appendix:Smooth-Histogram-Framework}.

\paragraph{Almost-Smooth Functions}

We start with an overview of our notion of almost-smooth functions;
for a more formal treatment see Definition \ref{def:ASfunction} and
Remark \ref{rem:ASfunctionRem}. A function $f$ defined on streams
is said to be $\emph{left-monotone}$ (non-decreasing) if for every
two disjoint segments $A,B$ of a stream $f\left(B\right)\le f\left(AB\right)$,
where $AB$ denotes their concatenation. Informally, we say that a
left-monotone function $f$ is $\emph{ d-almost-smooth}$ if $\frac{f\left(B\right)}{f\left(AB\right)}\le d\cdot\frac{f\left(BC\right)}{f\left(ABC\right)}$
for all disjoint segments $A,B,C$; this means that whenever $f\left(B\right)$
approximates $f\left(AB\right)$ within some factor, appending any
segment $C$ will maintain this approximation up to an extra factor
$d$. For example, the maximum-matching size is $2$-almost-smooth
(see Corollary \ref{cor:MM=000026VC2AS}), which means that if $A,B,C$
are disjoint sets of edges and $m\left(B\right)$ is a $\left(1+\eps\right)$-approximation
of $m\left(AB\right)$, then for every sequence $C$ of additional
edges, $m\left(BC\right)$ would $\left(\left(1+\eps\right)2\right)$-approximate
$m\left(ABC\right)$.

\paragraph{Algorithms for Almost-Smooth Functions}

For almost-smooth functions we show a general transformation of an
approximation algorithm in the insertion-only model to the sliding-window
model.
\begin{thm}
[Informal version of Theorem \ref{thm:Almost-Smooth}] \label{thm:Almost-Smooth-1}
Suppose function $f$ is $d$-almost-smooth and can be $C$-approximated
by an insertion-only algorithm $\Lambda$. Then there exists a sliding-window
algorithm $\lsw$ that $\left(\left(1+\eps\right)dC^{2}\right)$-approximates
$f$, with only a factor $\O{\eps^{-1}\log w}$ larger space and update
time than $\Lambda$.
\end{thm}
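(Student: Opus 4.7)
The plan is to generalize Braverman--Ostrovsky's smooth-histogram construction (recalled in Appendix \ref{subsec:Smooth-Histogram-Framework}) to the almost-smooth setting. The sliding-window algorithm $\lsw$ maintains a list of \emph{buckets} with start times $t_1<t_2<\cdots<t_s=t$; the $i$-th bucket runs an independent copy $\Lambda_i$ of the insertion-only algorithm on the substream $\langle\sigma_{t_i},\ldots,\sigma_t\rangle$, producing an estimate denoted $\Lambda_i(t)$. Each new stream item is fed to every existing bucket, a fresh bucket is opened at the current time, and then a \emph{deletion sweep} is performed: for each consecutive triple of indices $(i,i+1,i+2)$, delete the middle bucket whenever $\Lambda_{i+2}(t)\ge(1-\eps')\Lambda_i(t)$, where $\eps'=\Theta(\eps)$ is a slack parameter tuned in the analysis. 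A query at time $t$ returns $\Lambda_j(t)$, where $j$ is the largest index with $t_j\le t-w+1$ (the last bucket whose start does not lie strictly inside the active window).

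The first task is to bound $s$. After the deletion sweep, the invariant $\Lambda_{i+2}(t)<(1-\eps')\Lambda_i(t)$ holds for every $i$, so the odd-indexed sub-sequence $\Lambda_1,\Lambda_3,\ldots$ decreases by factor $1-\eps'$ at each step; combined with $\Lambda_i(t)\in[1,(1+\eps')Cf_{\max}]$ this gives $s=\O{\eps^{-1}\log(Cf_{\max})}$. For the target graph problems $f_{\max}$ is polynomial in $w$, so $s=\O{\eps^{-1}\log w}$, which is exactly the claimed factor blow-up in space and per-item update time, since each of the $s$ buckets runs one independent copy of $\Lambda$.

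The correctness argument combines left-monotonicity, the $C$-approximation guarantee of $\Lambda$, and a \emph{single} invocation of $d$-almost-smoothness. The lower bound is immediate: by left-monotonicity $f([t_j,t])\ge f(W_t)$, and the insertion guarantee yields $\Lambda_j(t)\ge(1-\eps')f([t_j,t])\ge(1-\eps')f(W_t)$. For the upper bound I would identify the most recent moment $t^{\#}\le t$ at which buckets $j$ and $j+1$ became adjacent in the list (either the last deletion of an intermediate bucket, or the creation of bucket $j+1$). At $t^{\#}$ the deletion rule guarantees $\Lambda_{j+1}(t^{\#})\ge(1-\eps')\Lambda_j(t^{\#})$, and converting $\Lambda$-estimates to true $f$-values via the insertion guarantee (costing one extra factor of $C$) yields $f([t_{j+1},t^{\#}])\ge\tfrac{(1-\eps')^{2}}{(1+\eps')C}f([t_j,t^{\#}])$. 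Applying $d$-almost-smoothness to the segments $[t_j,t_{j+1}-1]$ and $[t_{j+1},t^{\#}]$ with forward extension $[t^{\#}+1,t]$ transports this closeness to the current time: $f([t_{j+1},t])\ge\tfrac{(1-\eps')^{2}}{(1+\eps')dC}f([t_j,t])$. Using $[t_{j+1},t]\subseteq W_t$, this rearranges to $f([t_j,t])\le(1+\O{\eps})dC\cdot f(W_t)$, and combining with $\Lambda_j(t)\le(1+\eps')Cf([t_j,t])$ gives the upper bound $\Lambda_j(t)\le(1+\eps)dC^{2}\,f(W_t)$, as required.

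The main obstacle is avoiding a multiplicative compounding of the loss factor $d$ across the deletion history: a naive induction over every merge event would accumulate $d^{\O{s}}$. The resolution is to invoke $d$-almost-smoothness \emph{only once}, between the most recent adjacency-forming time $t^{\#}$ and the current query time $t$. A secondary boundary case arises when bucket $j+1$ has just been created with no intervening bucket between it and bucket $j$: then the adjacency-forming event is a creation rather than a deletion, and the needed closeness is extracted from the most recent deletion of some bucket that lay strictly between them in the past, after which the same single almost-smoothness step closes the argument. The extra factor $C$ in the final $dC^{2}$ is the price of converting $\Lambda$-estimates to $f$-values at $t^{\#}$, and the slack $\eps'=\Theta(\eps)$ is set so that all the $(1\pm\eps')$ terms collapse into the advertised $(1+\eps)$.
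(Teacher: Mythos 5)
Your proposal follows essentially the same strategy as the paper's proof: maintain $\O{\eps^{-1}\log w}$ bucket suffixes each running a copy of $\Lambda$, delete a bucket when the estimates two buckets apart are within factor $1-\eps'$, and convert a one-shot closeness of estimates at the adjacency-forming time $t^{\#}$ into a closeness of $f$-values at query time via a single invocation of almost-smoothness. This correctly avoids the $d^{\O{s}}$ compounding, and your accounting of the extra factor $C$ (from translating $\Lambda$-values to $f$-values at $t^{\#}$) matches the paper's derivation of the $dC^2$ bound. The one cosmetic difference is the query rule: you output $\Lambda_j(t)$ for the outer bucket $[t_j,t]\supseteq W$, whereas the paper outputs a scaled-up copy of $\Lambda_{j+1}(t)$ from the inner bucket; both yield $(1+\O{\eps})dC^2$-approximation, the paper's version just gives a cleaner one-sided bound.

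Your handling of the ``secondary boundary case'' is slightly off, though the case itself is harmless. If bucket $j+1$ was created adjacent to bucket $j$ with no bucket ever existing between them, then $t_{j+1}=t_j+1$; there is no ``most recent deletion of some bucket that lay strictly between them'' to appeal to, so the resolution you describe does not apply. But in that situation $t_j\le t-w+1<t_{j+1}=t_j+1$ forces $t_j=t-w+1$, i.e., bucket $j$ is exactly the active window and $\Lambda_j(t)$ is already a $(1+\eps')C$-approximation of $f(W_t)$ directly. The paper sidesteps this by treating $B_1=W$ as a separate branch in the query procedure; you should do the same rather than look for a phantom deletion.
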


We show in Lemma \ref{lem:subadditive2AS} that every subadditive
function (defined below) $f$ is $2$-almost-smooth; it follows (using
Theorem \ref{thm:Almost-Smooth-1}) that every insertion-only algorithm
for approximating such $f$ can be adapted to the sliding-window model
with a small overhead in the approximation ratio, space complexity,
and update time.
\begin{defn}
\label{def:subadditive}A function $f$ is said to be $\emph{subadditive}$
if for every disjoint segments $A,B$ of a stream it holds that $f\left(AB\right)\le f\left(A\right)+f\left(B\right)$.
\end{defn}

\paragraph{Frequency-Vector Streams}

In the frequency-vector model, where the stream consists of additive
updates to an underlying vector $x\in\R^{n}$, we show that every
symmetric norm is subadditive and thus admits our framework of almost-smoothness.
We can then use a result of B$\l$asiok et al. \cite{BBC17} that
provides a $\left(1+\eps\right)$-approximation randomized streaming
algorithm for every symmetric norm, to deduce the following theorem.
\begin{thm}
[Informal version of Theorem \ref{thm:symmetric_norm}] \label{thm:symmetric_norm-1}
Every symmetric norm on $\R^{n}$ admits a $\left(2+\eps\right)$-approximation
sliding-window randomized algorithm with space complexity $\Oet L$,
where $L$ is a certain quantity associated with the norm.
\end{thm}

Previously, sliding-window algorithms were known only for $\ell_{p}$
norms \cite{Braverman:2007:SHS:1333875.1334202,WZ21}. We also exemplify
a norm that does not admit the more restricted smooth-histogram technique
of Braverman and Ostrovsky. We prove that the top-$k$ norm for $k=\frac{n}{2}$
is not $d$-almost-smooth for any $d<2$, although it is subadditive
and thus $2$-almost-smooth. Moreover, this norm has $L=\O{\polylog\left(n\right)}$,
where $L$ is the associated quantity from Theorem \ref{thm:symmetric_norm-1},
and therefore our algorithm has space complexity $\Oe{\polylog\left(nw\right)}$
(see Section \ref{subsec:Symmetric-Norms}). Note that we also use
this algorithm to obtain an $\O 1$-approximation sliding-window algorithm
for Max $k$-Cover (see Section \ref{subsec:Max-k-Cover}).

\paragraph{Matrix Streams}

For matrix streams, where the stream consists of $n$ row vectors
(in $\R^{m}$) that form an $n\times m$ matrix, we show that the
Schatten $p$-norm, for $p\ge2$, is $\sqrt{2}$-almost-smooth. For
$p=4$ we use an insertion-only algorithm of Braverman et al. \cite{Braverman2020SchattenNI}
to obtain a sliding-window algorithm.
\begin{thm}
[Informal version of Corollary \ref{cor:schatten_4-norm_alg}] \label{thm:schatten_4-norm_alg-1}
There exists a sliding-window algorithm that $\left(\sqrt{2}+\eps\right)$-approximates
the Schatten $4$-norm of a matrix using $\Oe{\polylog\left(nw\right)}$
bits of space.
\end{thm}

\paragraph{Graph Streams}

In the graph-streams domain, where the stream consists of edges that
form an underlying graph, we consider several problems, starting in
Section \ref{sec:AS-example} with two examples of graph problems
suitable for our framework. The first one is maximum submodular matching,
which is a generalization of the maximum-weight matching problem to
submodular objective functions. The second one is Max $k$-Cover,
which is dual to Vertex-Cover and whose goal is to cover as many edges
as possible using only $k$ vertices. We prove that both problems
are subadditive and hence $2$-almost-smooth. For both problems we
present an $\O 1$-approximation sliding-window algorithms, with space
complexity $\Ot n$, see Corollaries \ref{cor:MSM_alg} and \ref{cor:max-k-cover_apprx_alg},
respectively.

Recent research on graph streams \cite{Esfandiari:2018:SAE:3266298.3230819,mcgregor_et_al:LIPIcs:2016:6640,10.1007/978-3-662-48350-3_23,Chitnis:2016:KVS:2884435.2884527}
addressed maximum-matching size in a restricted family of graphs.
Specifically, McGregor and Vorotnikova \cite{mcgregor_et_al:OASIcs:2018:8295},
improving over Cormode et al. \cite{cormode_et_al:LIPIcs:2017:7849},
designed a $\polylog\left(n\right)$-space algorithm for estimating
the maximum-matching size in arboricity-$\alpha$ graphs within factor
$\O{\alpha}$. Recall that the $\emph{arboricity}$ of a graph $\G$
is the minimal $\alpha\ge1$ such that the set of edges $E$ can be
partitioned into at most $\alpha$ forests. For example, it is well
known that every planar graph has arboricity $\alpha\le3$, see e.g.
\cite{GROSSI1998121}.

Using our generalization of the smooth-histogram technique we provide
several algorithms for estimating maximum matching and minimum vertex-cover
in bounded-arboricity graphs. In particular, we show the following
theorem for maximum matching in Section \ref{sec:Maximum-Matching}.
We compare it in Table \ref{tab:SWMM} with the known insertion-only
algorithms \cite{cormode_et_al:LIPIcs:2017:7849,mcgregor_et_al:OASIcs:2018:8295}.
Note that a straightforward application of Theorem \ref{thm:Almost-Smooth-1}
leads to a weaker result, where the dependency on the arboricity $\alpha$
is quadratic.
\begin{thm}
\label{thm:MM_SW} For every $\eps,\delt\in\left(0,\frac{1}{2}\right)$,
there is a sliding-window $\left(\left(2+\eps\right)\left(\alpha+2\right),\delt\right)$-estimation
algorithm for maximum-matching in graphs of arboricity $\alpha$,
with space complexity $\O{\eps^{-3}\log^{4}n\log\tfrac{1}{\eps\delt}}$
bits and update time $\O{\eps^{-3}\log^{3}n\log\tfrac{1}{\eps\delt}}$.
\end{thm}

\begin{table}[H]
\centering{}%
\begin{tabular}{|>{\centering}m{3cm}ccc|}
\hline 
Stream & Approx. & Space & Reference\tabularnewline
\hline 
\hline 
\vspace{0.05cm}
insertion-only\vspace{0.05cm}
 & $22.5\alpha+6$ & $\O{\alpha\log^{2}n}$ & \cite{cormode_et_al:LIPIcs:2017:7849}\tabularnewline
\hline 
insertion-only & $\left(\alpha+2\right)+\eps$ & $\Oe{\log^{2}n}$ & \cite{mcgregor_et_al:OASIcs:2018:8295}\tabularnewline
\hline 
sliding-window & $\left(2+\eps\right)\left(\alpha+2\right)$ & $\Oe{\log^{4}n}$ & Theorem \ref{thm:MM_SW}\tabularnewline
\hline 
\end{tabular}\caption{\label{tab:SWMM}\negmedspace{}Randomized estimation algorithms for
maximum-matching in graphs of arboricity $\alpha$.}
\end{table}

We design several algorithms also for (estimation and approximation
of) vertex-cover based on its relation to maximum matching as summarized
in Table \ref{tab:SWMVC}. First, for general graphs, our sliding-window
algorithm (Theorem \ref{thm:SW_VC}) improves the previous approximation
ratio, essentially from 8 to 4, using the same space complexity. The
improvement comes from utilizing the almost-smoothness of the greedy
matching (instead of the optimal vertex-cover). Next, for VDP (vertex-disjoint
paths\footnote{A graph $\G$ is said to be VDP if $G$ is a union of vertex-disjoint
paths. This family was used to prove lower bounds in \cite{Esfandiari:2018:SAE:3266298.3230819}.}) and forest graphs (arboricity $\alpha=1$) we compare our two sliding-window
estimation algorithms to one another, as well as to the known turnstile
estimation algorithm \cite{Otniel16}. Notice that Theorem \ref{thm:SW-forVC-polylog}
applies also to VDP graphs (as a special case of forests) and thus
offers a much better space complexity than Theorem \ref{thm:SW-forVC-sqrt},
$\Oe{\log^{4}n}$ compared to $\Oet{\sqrt{n}}$, although the approximation
ratio is slightly worse.

\begin{table}[H]
\centering{}%
\begin{tabular}{|>{\raggedright}m{2.85cm}|>{\centering}p{1.3cm}|>{\centering}m{3cm}ccc|}
\hline 
Problem & Graphs & Stream & Approx. & Space & Reference\tabularnewline
\hline 
\hline &  & \vspace{0.05cm}
insertion-only\vspace{0.05cm}
 & $2$ & $\O{n\log n}$ & Folklore\tabularnewline
\cline{3-6} \cline{4-6} \cline{5-6} \cline{6-6} 
\centering vertex-cover

(approximation) & general & \vspace{0.05cm}
sliding-window\vspace{0.05cm}
 & $8+\eps$ & $\Oe{n\log^{2}n}$ & \cite{Otniel16}\tabularnewline
\cline{3-6} \cline{4-6} \cline{5-6} \cline{6-6} 
 &  & \vspace{0.05cm}
sliding-window\vspace{0.05cm}
 & $4+\eps$ & $\Oe{n\log^{2}n}$ & Theorem \ref{thm:SW_VC}\tabularnewline
\hline 
\hline 
\multirow{5}{2.85cm}{\centering  vertex-cover size (estimation)} & VDP & \vspace{0.05cm}
insertion-only\vspace{0.05cm}
 & $1.5-\eps$ & $\Omega\left(\sqrt{n}\right)$ & \cite{Esfandiari:2018:SAE:3266298.3230819}\tabularnewline
\cline{2-6} \cline{3-6} \cline{4-6} \cline{5-6} \cline{6-6} 
 & VDP & \vspace{0.05cm}
turnstile\vspace{0.05cm}
 & $1.25+\eps$ & $\Oe{\sqrt{n}\log^{2}n}$ & \cite{Otniel16}\tabularnewline
\cline{2-6} \cline{3-6} \cline{4-6} \cline{5-6} \cline{6-6} 
 & VDP & \vspace{0.05cm}
sliding-window\vspace{0.05cm}
 & $3.125+\eps$ & $\Oe{\sqrt{n}\log^{4}n}$ & Theorem \ref{thm:SW-forVC-sqrt}\tabularnewline
\cline{2-6} \cline{3-6} \cline{4-6} \cline{5-6} \cline{6-6} 
 & forests & \vspace{0.05cm}
insertion-only\vspace{0.05cm}
 & $2+\eps$ & $\Oe{\log^{2}n}$ & \cite{mcgregor_et_al:OASIcs:2018:8295}\tabularnewline
\cline{2-6} \cline{3-6} \cline{4-6} \cline{5-6} \cline{6-6} 
 & forests & \vspace{0.05cm}
sliding-window\vspace{0.05cm}
 & $4+\eps$ & $\Oe{\log^{4}n}$ & Theorem \ref{thm:SW-forVC-polylog}\tabularnewline
\hline 
\end{tabular}\caption{\label{tab:SWMVC}Randomized streaming algorithms for vertex-cover
in different settings. The results for vertex-cover size in forests
(including VDP graphs) apply also to maximum-matching size, since
the two quantities are equivalent by K\H{o}nig's Theorem, see Remarks
\ref{rem:2_app_of_E^*} and \ref{rem:VCforForests}.}
\end{table}

\section{\label{sec:SW-alg}Sliding-Window Algorithm for Almost-Smooth Functions}

In this section we generalize the smooth-histogram framework of Braverman
and Ostrovsky \cite{Braverman:2007:SHS:1333875.1334202} to functions
that are almost smooth, as per our new definition, and show that the
family of subadditive functions are almost smooth. We show that several
graph problems satisfy the subadditivity property, e.g., the maximum-matching
size and the minimum vertex-cover size. In the next two sections we
use these results to design sliding-window algorithms for those graph
problems.

\subsection{Almost-Smooth Functions}

Recall that for disjoint segments $A,B$ of a stream, we denote by
$AB$ their concatenation. We use the parameter $n$ to denote some
measure of a stream which will be clear from the context. For example,
for graph streams $n$ is the number of vertices in the underlying
graph. We extend the definition of smoothness due to \cite{Braverman:2007:SHS:1333875.1334202}
as follows.

\begin{defn}
\textbf{\label{def:ASfunction} (Almost-Smooth Function)} A real-valued
function $f$ defined on streams is $\emph{\ensuremath{\left(c,d\right)}-almost-smooth}$,
for $c,d\ge1$, if it has the following properties:
\begin{enumerate}
\item Non-negative: for every stream $A$ it holds that $f\left(A\right)\ge0$.
\item $c$-left-monotone: for every disjoint segments $A,B$ of a stream
it holds that $f\left(B\right)\le c\cdot f\left(AB\right)$.
\item Bounded: for every stream $A$ it holds that $f\left(A\right)\le\poly\left(n\right)$.
\item Almost smooth: for every disjoint segments $A,B,C$ of the stream,
\[
\frac{f\left(B\right)}{f\left(AB\right)}\le d\cdot\frac{f\left(BC\right)}{f\left(ABC\right)}
\]
whenever $f\left(AB\right)\neq0$ and $f\left(ABC\right)\neq0$.
\end{enumerate}
\end{defn}

~
\begin{rem}
\label{rem:ASfunctionRem}Almost-smoothness means that appending any
segment $C$ at the end of the stream preserves the approximation
of $f\left(B\right)$ by $f\left(AB\right)$, up to a multiplicative
factor $d$. Observe that property $4$ is equivalent to the condition
that for every $\eps>0$ and every disjoint segments of the stream
$A,B$ and $C$, 
\[
\begin{array}{ccccc}
\eps\cdot f\left(AB\right)\le f\left(B\right) &  & \implies &  & \eps\cdot f\left(ABC\right)\le d\cdot f\left(BC\right).\end{array}
\]
Obviously, $\eps>c$ is vacuous by property 2, hence it suffices to
consider $0<\eps\le c$. Throughout, it is more convenient to use
this equivalent condition.
\end{rem}

For generality we defined $\left(c,d\right)$-almost-smooth for any
$c\ge1$, but in our applications $c=1$, in which case we simply
omit $c$ and refer to such functions as $d$-almost-smooth.
\begin{rem}
\label{rem:(a,b)-smooth}In the original definition of smoothness
from \cite{Braverman:2007:SHS:1333875.1334202}, $c=d=1$, and property
$4$ is stated as follows (after some simplification). A function
$f$ is $\left(\eps,\beta\left(\eps\right)\right)$-smooth if for
every $\eps\in\left(0,1\right)$ there exists $\beta=\beta\left(\eps\right)$
such that $\beta\le\eps$ and
\[
\begin{array}{ccccc}
\left(1-\beta\left(\eps\right)\right)\cdot f\left(AB\right)\le f\left(B\right) &  & \implies &  & \left(1-\eps\right)\cdot f\left(ABC\right)\le f\left(BC\right).\end{array}
\]
Observe that this definition implies $d$-almost-smoothness if $d\coloneqq\sup\limits _{0<\eps<1}\frac{1-\beta\left(\eps\right)}{1-\eps}$
is bounded. In most applications, it suffices to consider $0<\eps\le\frac{1}{2}$,
and then $\frac{1-\beta\left(\eps\right)}{1-\eps}\le2$ is bounded. 
\end{rem}

We say that a function $f$ is $\emph{monotone}$ (non-decreasing)
if it is left-monotone and right-monotone, i.e., for every disjoint
segments $A,B$ of a stream $f\left(AB\right)\ge f\left(B\right)$
and $f\left(AB\right)\ge f\left(A\right)$.
\begin{lem}
\label{lem:subadditive2AS}Every subadditive, non-negative, bounded
and monotone function $f$ is $2$-almost-smooth.
\end{lem}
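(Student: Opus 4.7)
The plan is to verify each of the four conditions in Definition \ref{def:ASfunction} with parameters $c=1$ and $d=2$. Non-negativity and boundedness are part of the hypothesis, and $1$-left-monotonicity ($f(B) \le f(AB)$) follows immediately from monotonicity, so these would be dispatched in one line. The entire content of the lemma lies in condition~(4), the almost-smoothness inequality with constant~$2$.

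For condition~(4), I would use the equivalent implication form from Remark \ref{rem:ASfunctionRem}: assume $\eps\cdot f(AB) \le f(B)$ for some $\eps > 0$ and disjoint segments $A,B,C$, and derive $\eps\cdot f(ABC) \le 2\,f(BC)$. A brief preliminary reduction shows that we may assume $\eps \le 1$, since left-monotonicity $f(B) \le f(AB)$ forces $\eps \le 1$ whenever $f(AB) > 0$, and Definition \ref{def:ASfunction}(4) explicitly excludes the degenerate cases $f(AB) = 0$ and $f(ABC) = 0$.

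The heart of the argument is a three-step chain combining the hypothesized properties. Split $ABC$ as $A\cdot(BC)$ and apply subadditivity to obtain $f(ABC) \le f(A) + f(BC)$; then bound $f(A) \le f(AB)$ by right-monotonicity. Multiplying through by $\eps$, substituting the hypothesis $\eps\cdot f(AB) \le f(B)$, and using right-monotonicity once more in the form $f(B) \le f(BC)$, I expect to reach $\eps\cdot f(ABC) \le (1+\eps)\,f(BC) \le 2\,f(BC)$, where the final step uses the reduction $\eps \le 1$. I do not anticipate any real obstacle beyond bookkeeping---each step invokes exactly one of subadditivity, left-monotonicity, or right-monotonicity, and the $\eps \le 1$ reduction together with the vanishing edge cases are the only items that require a moment of care.
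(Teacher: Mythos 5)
Your proof is correct and takes essentially the same route as the paper: both hinge on establishing $f(ABC) \le f(AB) + f(BC)$ from one application of subadditivity plus one monotonicity step, and then combining it with the hypothesis $\eps\, f(AB) \le f(B)$ and $f(B) \le f(BC)$ to reach $\eps\, f(ABC) \le 2\, f(BC)$. The only cosmetic difference is that you split $ABC$ as $A\cdot(BC)$ and use $f(A)\le f(AB)$, whereas the paper splits it as $(AB)\cdot C$ and uses $f(C)\le f(BC)$; your explicit remark that left-monotonicity forces $\eps\le 1$ (when $f(AB)>0$) is a point the paper leaves implicit by simply restricting to $\eps\in(0,1)$.
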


\begin{proof}
The first three requirements are clear, as $f$ is assumed to be non-negative,
bounded and monotone. Hence, we are only left to show the almost-smoothness
property. Let $\eps\in\left(0,1\right]$ and let $A,B$ and $C$ be
disjoint segments of the stream satisfying $\eps f\left(AB\right)\le f\left(B\right)$.
Observe that $f\left(AB\right)+f\left(BC\right)\ge f\left(AB\right)+f\left(C\right)\ge f\left(ABC\right)$,
because $f$ is subadditive and monotone, and therefore,
\[
\begin{alignedat}{1}2f\left(BC\right)\ge & f\left(B\right)+f\left(BC\right)\ge\eps\cdot f\left(AB\right)+f\left(BC\right)\\
\ge & \eps\cdot\left(f\left(AB\right)+f\left(BC\right)\right)\ge\eps\cdot f\left(ABC\right).
\end{alignedat}
\]
\end{proof}
Recall that $m\left(S\right)$ and $VC\left(S\right)$ are the maximum-matching
size and the vertex-cover size, respectively, in the graph defined
by the stream $S$. Although they are both not smooth functions (as
shown in Corollary \ref{cor:MM=000026VC2AS}), they are almost smooth
(as proved by Crouch et al. \cite{Crouch2013} for $m\left(\cdot\right)$,
and reproduced here for completeness).
\begin{cor}
\label{cor:MM=000026VC2AS}The maximum-matching size $m\left(\cdot\right)$
and the minimum vertex-cover size $VC\left(\cdot\right)$ are both
$2$-almost-smooth. Moreover, they are both not $d$-almost-smooth
for any $d<2$.
\end{cor}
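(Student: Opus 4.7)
The plan is to deduce Corollary \ref{cor:MM=000026VC2AS} directly from Lemma \ref{lem:subadditive2AS} by checking, for each of $m(\cdot)$ and $VC(\cdot)$, the four hypotheses of that lemma: non-negativity, boundedness, monotonicity, and subadditivity. Non-negativity is immediate since both quantities are cardinalities of sets. Boundedness is also immediate: both $m(G)$ and $VC(G)$ are at most $n$, well within $\mathrm{poly}(n)$.

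For monotonicity of $m(\cdot)$, note that any matching in a subgraph remains a matching after adding edges, so for disjoint segments $A,B$ of the stream any maximum matching of $A$ (resp.\ $B$) is a matching of $AB$, giving $m(A)\le m(AB)$ and $m(B)\le m(AB)$. For subadditivity, given a maximum matching $M$ of $AB$, partition $M=M_A\cup M_B$ according to which segment contributed each edge; then $M_A$ is a matching in the graph induced by $A$ and $M_B$ in that of $B$, so $m(AB)=|M_A|+|M_B|\le m(A)+m(B)$.

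For $VC(\cdot)$, monotonicity holds because any vertex-cover of $AB$ must in particular cover all edges of $A$ (and of $B$), hence $VC(A)\le VC(AB)$ and $VC(B)\le VC(AB)$. Subadditivity follows from taking the union of a minimum vertex-cover $C_A$ of the graph defined by $A$ and a minimum vertex-cover $C_B$ of the graph defined by $B$: every edge of $AB$ originates in one of the two segments and is therefore covered by $C_A\cup C_B$, yielding $VC(AB)\le|C_A|+|C_B|=VC(A)+VC(B)$.

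Having verified all four hypotheses for both functions, Lemma \ref{lem:subadditive2AS} applies and gives $2$-almost-smoothness of each. There is no real obstacle here; the only thing to be a little careful about is that subadditivity is argued on the induced subgraphs defined by the edge-segments $A$, $B$, $AB$ (the vertex set $V=[n]$ is fixed), so that a matching or vertex-cover of a segment is literally usable as a matching or vertex-cover of any supergraph obtained by concatenation.
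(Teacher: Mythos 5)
Your proof is correct and follows essentially the same route as the paper: verify non-negativity, boundedness, monotonicity, and subadditivity, then invoke Lemma \ref{lem:subadditive2AS}, using the same partition argument for $m(\cdot)$ and the same union-of-covers argument for $VC(\cdot)$. The only difference is that you spell out the monotonicity step a bit more explicitly than the paper, which treats it as immediate.
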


\begin{proof}
Obviously both $m\left(\cdot\right)$ and $VC\left(\cdot\right)$
are non-negative, bounded and monotone, since on a longer segment
of the stream both the maximum-matching and the minimum vertex-cover
cannot be smaller. Hence, we are only left to show that they are both
subadditive.

Let $M$ be a maximum-matching of the graph defined by the stream
$AB$, and denote by $M_{A}$ and $M_{B}$ the edges from $M$ that
appear in $A$ and $B$, respectively. Note that $M_{A}$ is a matching
in the graph defined by the stream $A$, and similarly for $M_{B}$.
Thus, clearly $\left|M_{A}\right|\le m\left(A\right)$ and $\left|M_{B}\right|\le m\left(B\right)$,
and therefore
\[
m\left(AB\right)=\left|M\right|=\left|M_{A}\right|+\left|M_{B}\right|\le m\left(A\right)+m\left(B\right),
\]
and so $m\left(\cdot\right)$ is subadditive.

Observe that for a disjoint segments of the stream $A$ and $B$,
the union of a minimum vertex-cover on $A$ and a minimum vertex-cover
on $B$ is clearly a feasible (not necessarily minimum) vertex-cover
on $AB$, and since it is a minimization problem we obtain $VC\left(A\right)+VC\left(B\right)\ge VC\left(AB\right)$.
Hence $VC\left(\cdot\right)$ is also subadditive.

For the tightness argument we use the following tight example. Let
$\G$ be a graph composed of $n$ vertex-disjoint paths of length
$3$, i.e., $n$ paths of the form $e_{a}=\left\{ x,y\right\} ,e_{b}=\left\{ y,z\right\} ,e_{c}=\left\{ z,w\right\} $.
The segment $A$ of the stream contains all the $e_{a}$ edges, $B$
contains all the $e_{b}$ edges, and $C$ contains all the $e_{c}$
edges. Obviously $m\left(AB\right)=m\left(B\right)=m\left(BC\right)=n$
while $m\left(ABC\right)=2n$, and similarly for $VC\left(\cdot\right)$.
In particular, both maximum-matching size and minimum vertex-cover
are not smooth as per the original definition of \cite{Braverman:2007:SHS:1333875.1334202}.
\end{proof}
~
\begin{rem}
It is easy to see that Corollary \ref{cor:MM=000026VC2AS} holds even
for hypergraphs by the same arguments. 
\end{rem}

We analyze the smooth-histogram algorithm of \cite{Braverman:2007:SHS:1333875.1334202}
for functions that are almost-smooth with constant approximation ratio.

\begin{thm}
\label{thm:Almost-Smooth} [Formal version of Theorem \ref{thm:Almost-Smooth-1}]
Let $f$ be a $\left(c,d\right)$-almost-smooth function defined on
streams. Assume that for every $\eps,\delt\in\left(0,\frac{1}{2}\right)$,
there exists an algorithm $\Lambda$ for insertion-only streams that
$\left(\left(1+\eps\right)C,\delt\right)$-approximates $f$ using
space $s\left(\eps,\delt\right)$ and update time $t\left(\eps,\delt\right)$.
Then for every $\eps,\delt\in\left(0,\frac{1}{2}\right)$ there exists
a sliding-window algorithm $\lsw$ that\\
 $\left(dc^{2}C^{2}\left(1+\O{\eps}\right),\delt\right)$-approximates
$f$ using space $\O{\eps^{-1}\log w\cdot\left(s\left(\eps,\frac{\eps\delt}{2w\log w}\right)+\log w\right)}$
and update time $\O{\eps^{-1}\log w\cdot t\left(\eps,\frac{\eps\delt}{2w\log w}\right)}$.
\end{thm}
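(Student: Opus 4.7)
The plan is to adapt the smooth-histogram framework of Braverman and Ostrovsky \cite{Braverman:2007:SHS:1333875.1334202} from smooth to almost-smooth functions. The algorithm maintains a list of ``buckets'' with starting times $t_1 < t_2 < \cdots < t_k$, each carrying an independent instance $\Lambda_i$ of the insertion-only algorithm run on the suffix of the stream starting at $t_i$. On each new item I forward it to every $\Lambda_i$ and spawn a fresh instance at the current time. I then perform a maintenance sweep: while there exist three consecutive buckets with $\Lambda_{i+1} \ge (1-\eps)\Lambda_{i-1}$, delete the middle one. Buckets whose starting time has fallen out of the window are also discarded, except the most recent such one which is retained for the query.

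For the complexity bound, the invariant that every three consecutive buckets satisfy $\Lambda_{i+1} < (1-\eps)\Lambda_{i-1}$, together with boundedness $f \le \text{poly}(n)$ (where $n$ may be replaced by $\text{poly}(w)$ since each $\Lambda_i$ only sees $w$ items), yields $k = \O{\eps^{-1}\log w}$. Multiplying by the per-instance space $s(\eps,\cdot)$ plus $\O{\log w}$ for storing a timestamp yields the stated space and update-time bounds. Setting each instance's failure probability to $\frac{\eps\delt}{2w\log w}$ and taking a union bound over the $\O{\eps^{-1}\log w}$ instances active across up to $w$ query times yields total failure probability at most $\delt$.

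The heart of the argument is the query analysis. At query time $t$, let $B_e$ be the latest bucket with $t_e \le t-w+1$ and $B_a$ the first bucket with $t_a > t-w+1$; these are consecutive after pruning, and $B_a \sseq W \sseq B_e$ as item sets. I return $\frac{c}{1-\eps}\Lambda(B_e)$, which overshoots $f(W)$ by design: the lower side of the approximation is immediate, since $\Lambda(B_e) \ge (1-\eps)f(B_e) \ge \frac{1-\eps}{c}f(W)$ by $c$-left-monotonicity. For the upper side, let $t' \le t$ be the latest time at which $B_a$ and $B_e$ became consecutive (the separating bucket was deleted). The deletion trigger gives $\Lambda(B_a)(t') \ge (1-\eps)\Lambda(B_e)(t')$, which the two-sided approximation guarantee translates into $f(B_a(t')) \ge \frac{(1-\eps)^2}{(1+\eps)C}\, f(B_e(t'))$. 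Applying the equivalent form of almost-smoothness (Remark \ref{rem:ASfunctionRem}) with $A$ equal to the prefix of $B_e(t')$ preceding $B_a(t')$, $B = B_a(t')$, and the role of $C$ played by the items arriving in $(t',t]$, this ratio is preserved up to the factor $d$: $f(B_a(t)) \ge \frac{(1-\eps)^2}{d(1+\eps)C}\, f(B_e(t))$. Combining this with $f(B_a) \le c\, f(W)$ (again $c$-left-monotonicity) and $\Lambda(B_e) \le (1+\eps)C\, f(B_e)$, and then rescaling by $\frac{c}{1-\eps}$, produces the claimed upper bound $dc^2C^2(1+\O\eps)f(W)$.

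The main obstacle is careful bookkeeping: the factors $(1\pm\eps)$ and $C$ each appear once per application of $\Lambda$ (twice total, once for each of $B_a$ and $B_e$), $c$ appears twice (once when converting between $f(B_a)$ and $f(W)$, and once in the rescaling), and $d$ appears once via almost-smoothness, and these must all collect cleanly into $dc^2C^2(1+\O\eps)$. A subtlety is the ``latest-merge'' step: the pair $(B_e,B_a)$ may have become consecutive through a chain of earlier deletions, but since almost-smoothness is a direct inequality valid over \emph{any} $A,B,C$, applying it at the single most recent merge time (after which the pair stays consecutive while new items are appended) is sufficient; no chaining across deletions is required. Edge cases such as a short stream with no valid $B_e$ are handled by returning $\Lambda$ on the oldest extant bucket directly.
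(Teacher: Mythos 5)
Your proof follows the same approach as the paper: you maintain a smooth-histogram bucket structure, argue that at the most recent merge time the $\Lambda$-values of the two surviving buckets were within factor $1-\eps$, translate this via the two-sided approximation guarantee into a ratio bound on $f$, propagate that ratio forward to the query time using $(c,d)$-almost-smoothness, and sandwich $f(W)$ between the $c$-scaled $f$-values of the two adjacent buckets; the only cosmetic difference is that you report a rescaled $\Lambda$ on the outer bucket $B_e$ rather than (as the paper does) on the inner bucket, which changes the exact scaling constant but yields the same $dc^2C^2(1+O(\eps))$ ratio. One small thing you gesture at but do not spell out is the reduction to streams of length $O(w)$ (the paper's Claim~\ref{claim:SW_bound_by_2w}), which is what makes the claim ``each $\Lambda_i$ only sees $O(w)$ items'' rigorous and lets the union bound and the $\log w$ factors go through cleanly.
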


We prove Theorem \ref{thm:Almost-Smooth} in appendix \ref{appendix:Proof-of-Main-Theorem}.
At a high level, we adapt the approach and notations of Crouch et
al. \cite{Crouch2013}, which in turns is based on the smooth-histogram
method of Braverman and Ostrovsky \cite{Braverman:2007:SHS:1333875.1334202}.

For certain approximation algorithms we can reduce the dependence
on the approximation factor $C$ from quadratic $\left(C^{2}\right)$
to linear $\left(C\right)$. Suppose that the approximation algorithm
$\Lambda$ of the function $f$ has the following form: It $\left(1+\eps,\delt\right)$-approximates
a function $g$, and this $g$ is a $C$-approximation of $f$. Now,
if $g$ itself is $\left(c,d\right)$-almost-smooth then we can save
a factor of $C$ by arguing directly about approximating $g$.
\begin{thm}
\label{thm:Ext-Almost-Smooth} Let $f$ be some function, let $g$
be a $\left(c,d\right)$-almost-smooth function, and assume that $g$
is a $C$-approximation of $f$. Assume that for every $\eps,\delt\in\left(0,\frac{1}{2}\right)$,
there exists an algorithm $\Lambda$ for insertion-only streams that
$\left(1+\eps,\delt\right)$-approximates $g$ using space $s\left(\eps,\delt\right)$
and update time $t\left(\eps,\delt\right)$. Then for every $\eps,\delt\in\left(0,\frac{1}{2}\right)$
there exists a sliding-window algorithm $\lsw$ that $\left(dc^{2}C\left(1+\O{\eps}\right),\delt\right)$-approximates
$f$ using space $\O{\eps^{-1}\log w\cdot\left(s\left(\eps,\frac{\eps\delt}{2w\log w}\right)+\log w\right)}$
and update time $\O{\eps^{-1}\log w\cdot t\left(\eps,\frac{\eps\delt}{2w\log w}\right)}$.
\end{thm}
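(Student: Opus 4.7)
The plan is to apply Theorem~\ref{thm:Almost-Smooth} to $g$ in place of $f$, and then interpret the resulting estimate of $g$ as an estimate of $f$ via the hypothesis that $g$ is a $C$-approximation of $f$. By assumption $\Lambda$ is a $(1+\eps,\delt)$-approximation of $g$, which fits the hypothesis of Theorem~\ref{thm:Almost-Smooth} with the outer approximation factor equal to $1$. Since $g$ is $(c,d)$-almost-smooth, that theorem immediately yields a sliding-window algorithm $\lsw$ that $\bigl(dc^{2}(1+O(\eps)),\delt\bigr)$-approximates $g$ with the space and update-time bounds claimed here. The additive $\log w$ term present in the bound of Theorem~\ref{thm:Almost-Smooth} can be absorbed into $s(\cdot,\cdot)$ since any nontrivial streaming algorithm for $g$ uses $\Omega(\log w)$ bits.

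Defining the sliding-window algorithm for $f$ to output $\lsw(W)$ verbatim, I would then verify correctness by the following chain: by hypothesis, $f(W)\le g(W)\le C\cdot f(W)$ on every active window $W$, so with probability at least $1-\delt$,
\[
(1-\eps)\,f(W)\le(1-\eps)\,g(W)\le\lsw(W)\le dc^{2}(1+O(\eps))\cdot g(W)\le dc^{2}C(1+O(\eps))\cdot f(W),
\]
which is the desired $\bigl(dc^{2}C(1+O(\eps)),\delt\bigr)$-approximation of $f$. The edge case $g(W)=0$ forces $f(W)=0$ by the $C$-approximation hypothesis and requires no separate argument.

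The main (and essentially only) subtlety here is conceptual rather than technical: one must notice that a factor of $C$ can be saved by routing the smooth-histogram machinery through the almost-smooth proxy $g$ rather than through $f$ directly. If one applied Theorem~\ref{thm:Almost-Smooth} to $f$ itself, the $(1+\eps)C$-approximation of $f$ would enter the histogram analysis and end up being squared, producing a $C^{2}$ factor in the final ratio; by approximating $g$ in a clean $(1+\eps)$ manner through the framework and only translating from $g$ to $f$ at the very end, the factor $C$ is incurred only once, giving the claimed $dc^{2}C$ in place of $dc^{2}C^{2}$.
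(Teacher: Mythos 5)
Your proposal is correct and follows essentially the same route as the paper: apply Theorem~\ref{thm:Almost-Smooth} to the almost-smooth proxy $g$ (with inner approximation factor $1$), obtaining a $\bigl(dc^{2}(1+\O{\eps}),\delt\bigr)$-approximation of $g$, and then translate to $f$ via $f(W)\le g(W)\le C\cdot f(W)$, incurring the factor $C$ only once. Your explicit inequality chain and your remark about absorbing the additive $\log w$ term into $s(\cdot,\cdot)$ are sound and slightly more careful than the paper's own terse proof, but the underlying argument is the same.
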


\begin{proof}
By applying Theorem \ref{thm:Almost-Smooth} to the function $g$
and the algorithm $\Lambda$, that approximates it, we obtain a sliding-window
algorithm $\lsw$ that computes a $\left(dc^{2}\left(1+\O{\eps}\right),\delt\right)$-approximation
of $g$, uses space $\O{\eps^{-1}\log w\cdot\left(s\left(\eps,\frac{\eps\delt}{2w\log w}\right)+\log w\right)}$
and update time $\O{\eps^{-1}\log w\cdot t\left(\eps,\frac{\eps\delt}{2w\log w}\right)}$.
Since $g$ is a $C$-approximation of $f$, this algorithm $\lsw$
is in fact a $\left(dc^{2}C\left(1+\O{\eps}\right),\delt\right)$-approximation
of $f$, using the same space and update time.
\end{proof}

\section{\label{sec:AS-example}Example Applications of Almost-Smoothness}

We present a variety of examples where our framework yields new approximation
algorithms with relatively little effort. More specifically, we provide
examples of functions $f$ (over a stream) that are $\O 1$-almost-smooth,
and then employ our framework to convert known algorithms for insertion-only
streams into new algorithms for sliding-window streams, with a slightly
larger approximation ratio. For all these problems, the smooth-histogram
technique of Braverman and Ostrovsky \cite{Braverman:2007:SHS:1333875.1334202}
is not applicable, because these functions are not known to be smooth.
In fact, many of them are provably not smooth, which justifies the
necessity of our generalized framework.

Our first example (Section \ref{subsec:Symmetric-Norms}) is for the
common model of a frequency vector (additive updates to a vector),
where the function $f$ to be computed is an arbitrary symmetric norm
of that vector. We show that such a norm is subadditive and thus $2$-almost-smooth,
and then use a known algorithm \cite{BBC17} to obtain a sliding-window
algorithm. Our second example (Section \ref{subsec:Matrix-Streams})
is for streaming matrices where each item in the stream is the (next)
row of the matrix; this is a natural model for reading a matrix in
row order. We show that the Schatten $p$-norm of the matrix is $\sqrt{2}$-almost-smooth,
for all $p\ge2$, and we use a known algorithm for $p=4$ \cite{Braverman2020SchattenNI}
to derive a new sliding-window algorithm.

We then consider some discrete problems (Sections \ref{subsec:Submodular-Functions}
and \ref{subsec:Max-k-Cover}). One of them is maximum submodular
matching in a graph defined by a stream of edges on fixed vertex set
$\left[n\right]$. Using that every submodular function is subadditive,
we obtain $2$-almost-smoothness and use an algorithm of \cite{Chaka14}
to conclude a sliding-window algorithm. Another discrete problem is
max $k$-cover in a graph, for which we show both approximation and
estimation sliding-window algorithms. We use our result for symmetric
norms to show a $4$-estimation sliding-window algorithm for max $k$-cover
using $\Oet{\frac{n}{k}}$ bits of space. Lastly, we design for every
$d\in\left(1,2\right]$ an artificial function, based on the maximum-matching
size, that is $d$-almost smooth, and show it is tight.

\subsection{\label{subsec:Symmetric-Norms}Symmetric Norms}

Consider the frequency-vector streaming model, where the stream is
composed of additive updates to an underlying $n$-dimensional vector
$x\in\R^{n}$. We assume that all updates are positive and all entries
are polynomially bounded, i.e., $\left|x_{i}\right|\le\poly\left(n\right)$
for all $i\in\left[n\right]$. The assumption of positive updates
is necessary because otherwise even $2$-approximation of the number
of distinct items in a stream requires $\Omega\left(w\right)$ bits
of space, see Datar et al. \cite{StreamStatistics}.

We show that every symmetric norm $\ell:\R^{n}\rightarrow\R$ (symmetric
means invariant under sign-flips and coordinate-permutations, see
\cite{bhatia97}) can be $\left(2+\eps\right)$-approximated in the
sliding-window model with $\O{\mmc\left(\ell\right)^{2}\poly\left(\frac{\log nw}{\eps}\right)}$
bits of space, where $\mmc\left(\cdot\right)$ is the norm's maximum
modulus of concentration (see \cite{BBC17}). It is known that $\mmc\left(\ell\right)$
depends only on the norm $\ell$ itself and that $1\le\mmc\left(\ell\right)\le\sqrt{n}$.
 For a concrete example of a norm that is $2$-almost-smooth and this
parameter $d=2$ is tight, we consider the top-$k$ norm (see Definition
\ref{top-k_norm}) for $k=\Theta\left(n\right)$, and deduce for it
a sliding-window $\left(2+\eps\right)$-approximation algorithm with
$\poly\left(\frac{\log n}{\eps}\right)$ bits of space.
\begin{lem}
\label{lem:norm_subadditive}Every monotone norm is subadditive, and
hence $2$-almost-smooth.
\end{lem}

\begin{proof}
Subadditivity follows immediately from the triangle inequality, and
obviously every norm is non-negative and bounded. We assume monotonicity,
and therefore Lemma \ref{lem:subadditive2AS} implies that the norm
is $2$-almost-smooth.
\end{proof}
B$\l$asiok et al. \cite{BBC17} showed a $\left(1+\eps\right)$-approximation
randomized streaming algorithm for an arbitrary symmetric norm. We
use their algorithm to achieve $\left(2+\eps\right)$-approximation
sliding-window algorithm with almost the same space complexity.
\begin{thm}
[Formal version of Theorem \ref{thm:symmetric_norm-1}] \label{thm:symmetric_norm}Let
$\ell$ be a symmetric norm on $\R^{n}$. There exists a $\left(2+\eps\right)$-approximation
sliding-window randomized algorithm computing the norm $\ell$ using
$\O{\mmc\left(\ell\right)^{2}\poly\left(\frac{\log nw}{\eps}\right)}$
bits of space with constant success probability.
\end{thm}

\begin{proof}
Algorithm $1$ of \cite{BBC17} is a $\left(1+\eps\right)$-approximation
of a symmetric norm $\ell$ of a vector defined by a stream using
$\O{\mmc\left(\ell\right)^{2}\poly\left(\frac{\log n}{\eps}\right)}$
bits of space. This algorithm has constant success probability, which
can be amplified to $1-\frac{\eps}{10w\log w}$ using the median of
$\O{\log\frac{w}{\eps}}$ independent repetitions. Every symmetric
norm is monotone (Proposition IV.1.1 in \cite{bhatia97}), therefore
$\ell$ is $2$-almost-smooth (Lemma \ref{lem:norm_subadditive}).
This implies, by Theorem \ref{thm:Almost-Smooth}, a $\left(2+\eps\right)$-approximation
sliding-window algorithm using $\O{\mmc\left(\ell\right)^{2}\poly\left(\frac{\log nw}{\eps}\right)}$
bits of space.
\end{proof}
We now show a norm for which our almost-smooth framework is applicable,
but the smooth-histogram technique of Braverman and Ostrovsky \cite{Braverman:2007:SHS:1333875.1334202}
does not.
\begin{defn}
\label{top-k_norm}For $1\le k\le n$ the top-$k$ norm of a vector
$x\in\R^{n}$, denoted by $\left\Vert x\right\Vert _{\left(k\right)}$,
is the $\ell_{1}$ norm of the $k$ entries of $x$ with the largest
absolute value, i.e. $\left\Vert x\right\Vert _{\left(k\right)}=\sum\limits _{i=1}^{k}\left|x_{i}^{\downarrow}\right|$,
where $x^{\downarrow}$ is the vector $x$ reordered in decreasing
magnitude.
\end{defn}

\begin{rem}
\label{rem:top-k_norm_tight}Since the top-$k$ norm is monotone,
Lemma \ref{lem:norm_subadditive} shows that it is $2$-almost-smooth.
Furthermore, for $k\le\frac{n}{2}$, the almost-smoothness parameter
$d=2$ of the top-$k$ norm is tight. Let $A$ be the stream $\left(e_{1},\cdots,e_{k}\right)$,
where each $e_{i}\in\R^{n}$ is the $i$-th standard basis vector,
thus its total additive update is $x^{A}=\sum\limits _{i=1}^{k}e_{i}$.
Similarly, $B$ is the stream $\left(e_{k+1},\cdots,e_{n}\right)$
with corresponding vector $x^{B}=\sum\limits _{i=k+1}^{n}e_{i}$,
and $C$ has the same updates as stream $A$, thus $x^{C}=x^{A}$.
Observe that $\left\Vert x^{A}+x^{B}\right\Vert _{\left(k\right)}=\left\Vert x^{B}\right\Vert _{\left(k\right)}=k$
but $\left\Vert x^{A}+x^{B}+x^{C}\right\Vert _{\left(k\right)}=2k$
while $\left\Vert x^{B}+x^{C}\right\Vert _{\left(k\right)}=k$, hence
$d=2$ is indeed tight in the almost-smoothness of the top-$k$ norm.
\end{rem}

The maximum modulus of concentration of the top-$k$ norm is $\mmc\left(\left\Vert \cdot\right\Vert _{\left(k\right)}\right)=\Ot{\sqrt{\frac{n}{k}}}$,
see Section $6.1$ of \cite{BBC17}. Therefore, as a corollary of
Theorem \ref{thm:symmetric_norm} for the top-$k$ norm, we obtain
the following sliding-window algorithm.

\begin{cor}
\label{cor:top-k_SW}There exists a $\left(2+\eps\right)$-approximation
sliding-window randomized algorithm for the top-$k$ norm with $\O{\frac{n}{k}\poly\left(\frac{\log n}{\eps}\right)}$
bits of space. Specifically, for $k=\Theta\left(n\right)$, the space
complexity is $\poly\left(\frac{\log n}{\eps}\right)$.
\end{cor}

\begin{rem}
We are not aware of a lower bound on the approximation ratio for sliding-window
algorithms, and the approximation ratio in Corollary \ref{cor:top-k_SW}
can possibly be improved, say to $1+\eps$, using the same space complexity.
\end{rem}

\subsection{\label{subsec:Matrix-Streams}Matrix Streams}

A $\emph{matrix stream}$ is a sequence of updates to an underlying
matrix $X\in\R^{n\times m}$, for $n\ge m$, initialized to the all-zeros
matrix. We consider only row-order streams, where each update adds
a new row to the matrix (from row $1$ to row $n$), hence the $i$-th
update in the stream is a row vector $X_{i}\in\R^{m}$ representing
the $i$-th row of $X$. We assume that all entries are polynomially
bounded, i.e., $\left|X_{i,j}\right|\le\poly\left(n\right)$ for all
$i,j$.

We are interested in the $\emph{Schatten \ensuremath{p}-norm}$ of
a matrix $X$, i.e., the $\ell_{p}$-norm of the spectrum of $X$.
\begin{defn}
The $\emph{Schatten \ensuremath{p}-norm}$, for $p\ge1$, of a matrix
$X\in\R^{n\times m}$ with singular values $\sigma_{1}\ge\cdots\ge\sigma_{m}\ge0$,
is defined as
\[
\left\Vert X\right\Vert _{S_{p}}=\left(\sum_{i=1}^{m}\sigma_{i}^{p}\right)^{\nicefrac{1}{p}}.
\]
\end{defn}

\begin{rem}
Note that $\left\Vert X\right\Vert _{S_{2p}}^{2}=\left\Vert X^{T}X\right\Vert _{S_{p}}$
since $X^{T}X$ is a PSD matrix with eigenvalues $\left\{ \sigma_{i}^{2}\mid i\in\left[m\right]\right\} $.
\end{rem}

\begin{cor}
For $p\ge2$, the Schatten $p$-norm is $\sqrt{2}$-almost-smooth.
\end{cor}

\begin{proof}
Let $X$ be a matrix defined by a stream of $n$ rows $X_{1},\dots,X_{n}$.
The first $n'$ rows define a matrix $Y$ and the other $n-n^{'}$
rows define $Z$. Observe that $X^{T}X=\sum\limits _{i=1}^{n}X_{i}^{T}X_{i}=\sum\limits _{i=1}^{n^{'}}X_{i}^{T}X_{i}+\sum\limits _{i=1}^{n-n^{'}}X_{i}^{T}X_{i}=Y^{T}Y+Z^{T}Z$,
and using the triangle inequality for the Schatten $p/2$-norm (recall
$p\ge2$), we obtain
\[
\left\Vert X\right\Vert _{S_{p}}^{2}=\left\Vert Y^{T}Y+Z^{T}Z\right\Vert _{S_{p/2}}\le\left\Vert Y^{T}Y\right\Vert _{S_{p/2}}+\left\Vert Z^{T}Z\right\Vert _{S_{p/2}}=\left\Vert Y\right\Vert _{S_{p}}^{2}+\left\Vert Z\right\Vert _{S_{p}}^{2},
\]
hence the squared Schatten $p$-norm is subadditive. It is clearly
also non-negative, thus, in order to prove almost-smoothness we only
need to show it is bounded and monotone. Every rank-$1$ matrix $v^{T}v$,
for a row vector $v\in\R^{m}$, has (at most) one eigenvalue $\lambda=\left\Vert v\right\Vert _{2}^{2}$
different from $0$, hence $\left\Vert v^{T}v\right\Vert _{S_{p/2}}=\left\Vert v\right\Vert _{2}^{2}$,
and therefore
\[
\left\Vert X\right\Vert _{S_{p}}^{2}=\left\Vert X^{T}X\right\Vert _{S_{\nicefrac{p}{2}}}\le\sum\limits _{i=1}^{n}\left\Vert X_{i}^{T}X_{i}\right\Vert _{S_{\nicefrac{p}{2}}}=\sum\limits _{i=1}^{n}\left\Vert X_{i}\right\Vert _{2}^{2}\le\poly\left(n\right).
\]
Thus, the squared Schatten $p$-norm is bounded. To prove it is monotone,
we use Weyl's inequality, that for every real matrices $Y,Z$ it holds
that $\sigma_{i}\left(Y^{T}Y+Z^{T}Z\right)\ge\sigma_{i}\left(Y^{T}Y\right)$
for all $i$ and thus $\left\Vert Y+Z\right\Vert _{S_{p}}^{2}\ge\left\Vert Y\right\Vert _{S_{p}}^{2}$.

It then follows from Lemma \ref{lem:subadditive2AS} that the squared
Schatten $p$-norm is $2$-almost-smooth. It is then straightforward
that the Schatten $p$-norm itself is $\sqrt{2}$-almost-smooth. 
\end{proof}
By plugging a known streaming algorithm for the Schatten $4$-norm
into our almost-smooth framework, we obtain a sliding-window algorithm
with constant approximation ratio and constant space.
\begin{cor}
[Formal version of Theorem \ref{thm:schatten_4-norm_alg-1}] \label{cor:schatten_4-norm_alg}
There exists a sliding-window algorithm that $\left(\sqrt{2}+\eps\right)$-approximates
the Schatten $4$-norm using $\O{\eps^{-3}\log n\log\frac{w}{\eps}\log w}$
bits of space with constant success probability.
\end{cor}

\begin{proof}
Algorithm $5$ of \cite{Braverman2020SchattenNI} is a $\left(1+\eps\right)$-approximation
of the Schatten $4$-norm in the row-order model, using $\O{\eps^{-2}\log n}$
bits of space. This algorithm has constant success probability, which
can be amplified to $1-\frac{\eps}{10w\log w}$ using the median of
$\O{\log\frac{w}{\eps}}$ independent repetitions. Now Theorem \ref{thm:Almost-Smooth}
and the above $\sqrt{2}$-almost-smoothness of the Schatten $4$-norm
imply a $\left(\sqrt{2}+\eps\right)$-approximation sliding-window
algorithm using $\O{\eps^{-3}\log n\log\frac{w}{\eps}\log w}$ bits
of space.
\end{proof}

\subsection{\label{subsec:Submodular-Functions}Submodular Functions}

The $\emph{maximum submodular matching }$ (MSM) problem is a generalization
of the maximum weight matching (MWM) problem to submodular objective
functions.
\begin{defn}
A $\emph{submodular function}$ on a ground set $\mathcal{X}$ is
a set function $f:2^{\mathcal{X}}\rightarrow\R$ that satisfies $f\left(A\cup B\right)+f\left(A\cap B\right)\le f\left(A\right)+f\left(B\right)$
for all $A,B\sseq\mathcal{X}$.
\end{defn}

\begin{defn}
For a graph $\G$ and a monotone non-negative submodular function
$f:2^{E}\rightarrow\R_{+}$, the MSM problem asks to find a matching
$M\sseq E$ such that $f\left(M\right)$ is maximized. If $M^{*}\sseq E$
is a matching maximizing $f\left(\cdot\right)$ we denote $\widehat{f}\left(G\right)=f\left(M^{*}\right)$,
and say that $M^{*}$ is an $f$-maximum matching. For a stream $S$
of edges whose underlying graph is $G$, we define $\widehat{f}\left(S\right)=\widehat{f}\left(G\right)$.
\end{defn}

\begin{rem}
\label{rem: submodularTOsubadditive}If $f:2^{E}\rightarrow\R_{+}$
is submodular then the derived function $\widehat{f}$ of streams
of edges is subadditive (see Definition \ref{def:subadditive}). To
see this, let $AB$ be a stream with $f$-maximum matching $M^{*}$.
Denote by $M_{A}^{*}$ and $M_{B}^{*}$ the edges from the matching
$M^{*}$ that appear in the sub-streams $A$ and $B$, respectively.
Since $A$ clearly contains $M_{A}^{*}$, the $f$-maximum matching
in $A$ has value $\widehat{f}\left(A\right)\ge f\left(M_{A}^{*}\right)$,
and similarly for $B$. Therefore,
\[
\widehat{f}\left(AB\right)=f\left(M^{*}\right)\le f\left(M_{A}^{*}\right)+f\left(M_{B}^{*}\right)-f\left(\emptyset\right)\le\widehat{f}\left(A\right)+\widehat{f}\left(B\right).
\]
\end{rem}

Chakrabarti and Kale \cite{Chaka14} showed a constant-factor approximation
algorithm for the MSM problem in the semi-streaming model. They modeled
access to $f$ using a value oracle, which upon receiving a query
subset $E^{'}\sseq E$ outputs $f\left(E^{'}\right)$. We use the
same model of a value oracle. By apply our almost-smooth framework
to their algorithm, we achieve a constant-factor approximation algorithm
for the MSM problem in the sliding-window model.
\begin{cor}
\label{cor:MSM_alg}There exists a deterministic $\O 1$-approximation
sliding-window algorithm for the MSM problem with $\O{n\log n\log w}$
bits of space.
\end{cor}

\begin{proof}
Theorem $1$ of \cite{Chaka14} provides a deterministic $7.75$-approximation
insertion-only algorithm $\Lambda$ for the MSM problem with $\O{n\log n}$
bits of space. Since the MSM problem is subadditive (see Remark \ref{rem: submodularTOsubadditive}),
and thus $2$-almost-smooth, we derive from Theorem \ref{thm:Almost-Smooth}
a sliding-window algorithm $\Lambda^{sw}$ that $\left(2\cdot7.75^{2}\left(1+\eps\right)\right)$-approximates
the MSM problem using $\O{\eps^{-1}n\log n\log w}$ bits of space.
\end{proof}
\begin{rem}
One interesting submodular function follows from the Facility Location
problem. Let $m\in\N$ and $\G$ be a graph, and for every edge $e\in E$
let $V_{e}\in\R_{+}^{m}$ be a vector of values associated with the
edge $e$. The function $f:2^{E}\rightarrow\R_{+}$ defined by $f\left(E^{'}\right)=\sum\limits _{i=1}^{m}\max\limits _{e\in E^{'}}V_{e,i}$
for $E^{'}\sseq E$ is submodular, monotone, and non-negative (See
\cite{10.1007/BF01585521,KG11}).
\end{rem}

\subsection{\label{subsec:Max-k-Cover}Max $k$-Cover}

The $\emph{Max \ensuremath{k}-Cover}$ problem is dual to Vertex-Cover,
and its goal is to cover as many edges as possible (in a graph $G$)
using only $k$ vertices. Recently, McGregor and Vu \cite{McGregorAndTVu}
studied the Max $k$-Cover problem in the dynamic graph stream model,
where the stream consists of edge insertions and deletions. Specifically,
they showed a $\left(1+\eps\right)$-approximation streaming algorithm
for arbitrary $k$ with space bound $\Ot{\eps^{-2}n}$ and success
probability $1-\frac{1}{\poly\left(n\right)}$, where $n=\left|V\right|$,
see Theorem $20$ of \cite{McGregorAndTVu}. Note that we can assume
the entire stream is of length at most $2w$ (see claim \ref{claim:SW_bound_by_2w})
and since we are only concerned with insertion-only graph streams
we assume $w=\O{n^{2}}$. The aforementioned algorithm already has
high probability of success and thus requires no amplification. Using
this and our almost-smoothness technique, we achieve $\left(2+\eps\right)$-approximation
algorithm in the sliding-window model, with space bound $\Ot{\eps^{-3}n}$.
In addition, we present a $\left(2+\eps\right)$-estimation algorithm
in the same dynamic graph stream model, and consequently a $\left(4+\eps\right)$-estimation
algorithm in the sliding-window model, both with space bound $\Oet{\frac{n}{k}}$.
\begin{defn}
For a graph $\G$, a subset of vertices $U\sseq V$ is said to $\emph{cover}$
the set of incident edges $\widetilde{E}\left(U\right)=\left\{ e\in E:e\cap U\neq\emptyset\right\} $.
Denote by $c_{k}\left(G\right)=\max\left\{ \left|\widetilde{E}\left(U\right)\right|:\left|U\right|=k\right\} $
the largest number of edges that can be covered by $k$ vertices in
$G$. For a stream $S$ of edges, whose underlying graph is $G$,
we define $c_{k}\left(S\right)=c_{k}\left(G\right)$.
\end{defn}

\begin{cor}
For every $k$, the Max $k$-Cover function $c_{k}\left(\cdot\right)$
is subadditive, and hence $2$-almost-smooth.
\end{cor}

\begin{proof}
Let $AB$ be a stream of edges that defines a graph $\G$ and suppose
$c_{k}\left(G\right)=\left|\widetilde{E}\left(U\right)\right|$ for
some $U\sseq V$ with $\left|U\right|=k$. Let $E_{A}=\widetilde{E}\left(U\right)\cap A$
and $E_{B}=\widetilde{E}\left(U\right)\cap B$ and observe that $\left|\widetilde{E}\left(U\right)\right|=\left|E_{A}\right|+\left|E_{B}\right|$.
Therefore
\[
c_{k}\left(AB\right)=\left|\widetilde{E}\left(U\right)\right|=\left|E_{A}\right|+\left|E_{B}\right|\le c_{k}\left(A\right)+c_{k}\left(B\right),
\]
and so $c_{k}\left(\cdot\right)$ is subadditive. Obviously $c_{k}\left(\cdot\right)$
is also non-negative, bounded and monotone, hence Lemma \ref{lem:subadditive2AS}
applies.
\end{proof}
We start with a sliding-window approximation algorithm for Max $k$-Cover.
We obtain a sliding-window algorithm by plugging a known algorithm
into Theorem \ref{thm:Almost-Smooth}. Specifically, we use the aforementioned
algorithm of McGregor and Vu \cite[Theorem 20]{McGregorAndTVu} and
the fact that Max $k$-Cover is $2$-almost-smooth to obtain the following
corollary.
\begin{cor}
\label{cor:max-k-cover_apprx_alg}For every $k$ there exists a sliding-window
$\left(2+\eps\right)$-approximation algorithm for Max $k$-Cover
using $\Ot{\eps^{-3}n}$ bits of space.
\end{cor}

We now design a sliding-window estimation streaming algorithms for
Max $k$-Cover. Observe that if $\sigma_{k}\left(G\right)$ denotes
the sum of degrees of the $k$ vertices of largest degree in $G$,
then a greedy algorithm that reports $\sigma_{k}\left(G\right)$,
achieves a $2$-estimation of Max $k$-Cover. We can implement it
in a graph stream using a known algorithm for the top-$k$ norm (similarly
to Corollary \ref{cor:top-k_SW}),  which will $\left(2+\eps\right)$-estimate
$\sigma_{k}\left(G\right)$. Hence, we deduce a $\left(2+\eps\right)$-estimation
algorithm for Max $k$-Cover in the dynamic graph stream model with
space bound $\Oet{\frac{n}{k}}$. Moreover, we can similarly obtain
a $\left(4+\eps\right)$-estimation algorithm in the sliding-window
model using a sliding-window algorithm for the top-$k$ norm. The
details follow.
\begin{thm}
\label{thm:max-k-cover_alg}For every $k$ there exists a sliding-window
$\left(4+\eps\right)$-estimation algorithm for Max $k$-Cover using
$\Oet{\frac{n}{k}}$ bits of space.
\end{thm}

\begin{proof}
Let $x\in\R^{n}$ represent the degrees of the $n$ vertices, and
notice that $\left\Vert x\right\Vert _{\left(k\right)}=\sigma_{k}\left(G\right)$
using the notation from Section \ref{subsec:Symmetric-Norms}. By
incrementing coordinates $x_{i},x_{j}$ whenever an edge $\left(i,j\right)$
arrives, we create a virtual stream (twice as long) simulating updates
to $x$ in the frequency-vector model. Using Corollary \ref{cor:top-k_SW}
(with twice as large window size) we can $\left(2+\eps\right)$-estimate
$\sigma_{k}\left(G\right)$ in the sliding-window model using $\Oet{\frac{n}{k}}$
bits of space. Since $\frac{1}{2}c_{k}\left(G\right)\le\frac{1}{2}\sigma_{k}\left(G\right)\le c_{k}\left(G\right)$,
reporting the $\left(2+\eps\right)$-estimate of $\sigma_{k}\left(G\right)$
gives a sliding-window $\left(4+\eps\right)$-estimate of Max $k$-Cover
and uses $\Oet{\frac{n}{k}}$ bits of space.
\end{proof}

\subsection{Example of $d$-Almost-Smooth Functions for $d\in\left(1,2\right]$}

We design a special family of functions whose almost-smoothness parameter
$d$ can be any number between $1$ and $2$. For every $d\in\left(1,2\right]$
define the following function for an $n$-vertex graph $G$ with maximum
matching size $m\left(G\right)$
\[
f_{d}\left(G\right)=\frac{1}{4}\left(\frac{2-d}{d-1}\right)n+m\left(G\right).
\]
For a stream $S$ of edges, with an underlying graph $G$, we define
$f_{d}\left(S\right)=f_{d}\left(G\right)$.
\begin{lem}
\label{lem:f_d-d-AS}For every $d\in\left(1,2\right]$ the function
$f_{d}$ defined above is $d$-almost-smooth and not $d^{\prime}$-almost-smooth
for any $d^{\prime}<d$.
\end{lem}

\begin{proof}
Let $d\in\left(1,2\right]$, and observe that the function $f_{d}\left(\cdot\right)$
is non-negative, bounded and monotone, since $m\left(\cdot\right)$
satisfies all those requirements. Hence, we are only left to show
the almost-smoothness property. We shall use the equivalent definition
explained in Remark \ref{rem:ASfunctionRem}. Define $\alpha_{d}=\frac{1}{4}\left(\frac{2-d}{d-1}\right)$,
let $\eps\in\left(0,1\right]$ and let $A,B$ and $C$ be disjoint
segments of the stream satisfying $\eps f_{d}\left(AB\right)\le f_{d}\left(B\right)$,
i.e., $\eps\alpha_{d}n+\eps m\left(AB\right)\le\alpha_{d}n+m\left(B\right)$.
We divide the proof into two cases. If $m\left(BC\right)\le\frac{n}{4}=\frac{d-1}{2-d}\alpha_{d}n$
then using the same logic as in Lemma \ref{lem:subadditive2AS} we
obtain
\[
\begin{alignedat}{2}2m\left(BC\right) & \ge m\left(BC\right)+m\left(B\right)\\
 & \ge m\left(BC\right)+\eps\alpha_{d}n+\eps m\left(AB\right)-\alpha_{d}n &  & \text{since }\eps f_{d}\left(AB\right)\le f_{d}\left(B\right)\\
 & \ge\eps\left(m\left(AB\right)+m\left(BC\right)\right)+\eps\alpha_{d}n-\alpha_{d}n & \hspace{1cm} & \text{since }\eps\le1\\
 & \ge\eps m\left(ABC\right)+\eps\alpha_{d}n-\alpha_{d}n. &  & \text{since }m\left(\cdot\right)\text{ is subadditive}
\end{alignedat}
\]
Hence, $2m\left(BC\right)+\alpha_{d}n\ge\eps m\left(ABC\right)+\eps\alpha_{d}n=\eps f_{d}\left(ABC\right)$,
and therefore
\[
\eps f_{d}\left(ABC\right)\le\left(d+2-d\right)m\left(BC\right)+\alpha_{d}n\le dm\left(BC\right)+\left(d-1\right)\alpha_{d}n+\alpha_{d}n=d\cdot f_{d}\left(BC\right).
\]
In the other case, when $m\left(BC\right)>\frac{n}{4}$, since the
graph has $n$ vertices we have $m\left(ABC\right)\le\frac{n}{2}$,
and therefore
\[
\eps f_{d}\left(ABC\right)\le f_{d}\left(ABC\right)=\alpha_{d}n+m\left(ABC\right)\le\alpha_{d}n+\frac{n}{2}.
\]
From the assumption $m\left(BC\right)>\frac{n}{4}$ we obtain
\[
d\cdot f_{d}\left(BC\right)=d\cdot\alpha_{d}n+d\cdot m\left(BC\right)\ge d\cdot\alpha_{d}n+d\cdot\frac{n}{4}=\alpha_{d}n+\frac{n}{2}+\left(d-1\right)\cdot\alpha_{d}n+\frac{dn}{4}-\frac{n}{2}=\alpha_{d}n+\frac{n}{2},
\]
hence $\eps f_{d}\left(ABC\right)\le d\cdot f_{d}\left(BC\right)$,
as required.

For the tightness argument we use the same example as in Corollary
\ref{cor:MM=000026VC2AS} of disjoint union of $n$ paths of length
$3$. Recall that $\left|V\right|=4n$ and thus $m\left(AB\right)=m\left(B\right)=m\left(BC\right)=n$
but $m\left(ABC\right)=2n$, therefore we deduce that
\[
\frac{f_{d}\left(ABC\right)}{f_{d}\left(BC\right)}=\frac{\frac{1}{4}\left(\frac{2-d}{d-1}\right)4n+2n}{\frac{1}{4}\left(\frac{2-d}{d-1}\right)4n+n}=\frac{\frac{2-d}{d-1}+2}{\frac{2-d}{d-1}+1}=\frac{2-d+2d-2}{2-d+d-1}=d,
\]
showing that indeed $f_{d}$ is not $d^{\prime}$-almost-smooth for
any $d^{\prime}<d$.
\end{proof}

\section{\label{sec:Maximum-Matching}Applications to Maximum-Matching}

We show here a concrete example of the usefulness of the almost-smooth
histogram framework for the graph streaming model. Specifically, for
graphs of bounded arboricity $\alpha$, we use a known insertion-only
$\O{\alpha}$-estimation algorithm for maximum-matching, and deduce
a sliding-window algorithm with approximation factor $\O{\alpha^{2}}$
and space $\polylog\left(n\right)$. We then improve the approximation
ratio to $\O{\alpha}$ by observing that the number of $\alpha$-good
edges (the quantity used to approximate the maximum-matching size)
is itself a subadditive function, and thus we can argue directly about
it. See Table \ref{tab:SWMM}.

Recall that in the usual graph streaming model, the input is a stream
of edge insertions to an underlying graph on the set of vertices $V=\left[n\right]$,
where $n$ is known in advance. We assume that the underlying graph
does not contain parallel edges, i.e., the stream of edges does not
contain the same edge twice. Hence, the length of the entire stream
is bounded by $n^{2}$.

In the sliding-window model the graph is defined using only the last
$w$ edge insertions from the stream, referred to as the active window
$W$. Note that $w$ is known (to the algorithm) in advance, and that
$w\le n^{2}$, as the length of the entire stream is bounded by $n^{2}$.

McGregor and Vorotnikova \cite{mcgregor_et_al:OASIcs:2018:8295},
based on the result of Cormode et al. \cite{cormode_et_al:LIPIcs:2017:7849},
presented an algorithm that approximates the size of the maximum-matching
in a graph with arboricity $\alpha$ within factor $\left(1+\eps\right)\left(\alpha+2\right)$,
with constant probability, using space $\O{\eps^{-2}\log^{2}n}$ and
update time $\O{\eps^{-2}\log n}$. To achieve low failure probability
$\delt$ it is standard to compute a median of $\log\delt^{-1}$ parallel
repetitions. Therefore, direct application of the Almost-Smooth-Histogram
method yields the following theorem (but see Theorem \ref{thm:improved_thm}
for the improved approximation).
\begin{thm}
\label{thm:a^2-approx}For every $\eps,\delt\in\left(0,\frac{1}{2}\right)$,
there is a sliding-window $\left(\left(2+\eps\right)\left(\alpha+2\right)^{2},\delt\right)$-estimation
algorithm for maximum-matching size in a graph with arboricity $\alpha$,
with space complexity $\O{\eps^{-3}\log^{4}n\log\tfrac{1}{\eps\delt}}$
and update time $\O{\eps^{-3}\log^{3}n\log\tfrac{1}{\eps\delt}}$.
\end{thm}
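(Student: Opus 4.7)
The plan is to apply Theorem \ref{thm:Almost-Smooth} directly to the maximum-matching function $m(\cdot)$, using the McGregor--Vorotnikova algorithm \cite{mcgregor_et_al:OASIcs:2018:8295} as the black-box insertion-only estimator $\Lambda$. By Corollary \ref{cor:MM=000026VC2AS}, $m(\cdot)$ is $2$-almost-smooth, so in the notation of Definition \ref{def:ASfunction} we have $c=1$ and $d=2$. The MV algorithm is a $(1+\eps)(\alpha+2)$-estimator with constant failure probability, using $\O{\eps^{-2}\log^{2}n}$ space and $\O{\eps^{-2}\log n}$ update time, so $C=\alpha+2$.

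First I would amplify $\Lambda$ to failure probability $\delt''$ by running $\O{\log(1/\delt'')}$ independent copies and returning the median; this preserves the approximation factor while inflating the resource bounds by a factor $\log(1/\delt'')$. Feeding this boosted algorithm into Theorem \ref{thm:Almost-Smooth} produces a sliding-window algorithm $\lsw$ with approximation ratio
\[
d\,c^{2}\,C^{2}\,(1+\O{\eps}) \;=\; 2(\alpha+2)^{2}(1+\O{\eps})
\]
and failure probability $\delt$. Rescaling the internal accuracy parameter by a constant so that $2(1+\O{\eps})\le 2+\eps$ yields the claimed ratio $(2+\eps)(\alpha+2)^{2}$.

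All that remains is bookkeeping of the complexities. Since the stream length, and therefore $w$, is at most $n^{2}$, we have $\log w=\O{\log n}$. Theorem \ref{thm:Almost-Smooth} instantiates the inner failure probability as $\delt''=\eps\delt/(2w\log w)$, so $\log(1/\delt'')=\O{\log n+\log(1/(\eps\delt))}=\O{\log n\cdot\log(1/(\eps\delt))}$. Plugging this into the boosted space bound $\O{\eps^{-2}\log^{2}n\cdot\log(1/\delt'')}$ and multiplying by the outer overhead $\O{\eps^{-1}\log w}$ delivers the stated space $\O{\eps^{-3}\log^{4}n\log(1/(\eps\delt))}$; the analogous computation on the update time gives $\O{\eps^{-3}\log^{3}n\log(1/(\eps\delt))}$.

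There is no genuine obstacle: all of the heavy lifting is absorbed into Corollary \ref{cor:MM=000026VC2AS} (almost-smoothness of $m(\cdot)$) and Theorem \ref{thm:Almost-Smooth} (the generalized smooth-histogram reduction). The only points requiring mild care are the probability amplification and the step where the $\log w$ factor is charged against $\O{\log n}$ using $w\le n^{2}$. Note that this direct route costs a quadratic $C^{2}=(\alpha+2)^{2}$ factor; removing one factor of $(\alpha+2)$ would require invoking Theorem \ref{thm:Ext-Almost-Smooth} with an almost-smooth proxy function, which is pursued separately in the proof of the sharper Theorem \ref{thm:MM_SW}.
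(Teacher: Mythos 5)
Your proposal is correct and follows essentially the same route as the paper: apply Theorem \ref{thm:Almost-Smooth} to $m(\cdot)$ with $c=1$, $d=2$, $C=\alpha+2$, using the median-amplified McGregor--Vorotnikova estimator as $\Lambda$, then charge $\log w$ against $\O{\log n}$ via $w\le n^{2}$. The bookkeeping on space and update time matches the paper's computation, and your closing remark about trading $C^{2}$ down to $C$ via Theorem \ref{thm:Ext-Almost-Smooth} is exactly how the paper later sharpens this to Theorem \ref{thm:MM_SW}.
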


\begin{proof}
For $\eps,\delt\in\left(0,\frac{1}{2}\right)$ let $\Lambda_{MV}$
be the algorithm of McGregor and Vorotnikova \cite{mcgregor_et_al:OASIcs:2018:8295},
amplified to have success probability $1-\delt$, providing $\left(\left(1+\eps\right)\left(\alpha+2\right),\delt\right)$-approximation
for maximum-matching size in graphs with arboricity at most $\alpha$.
As shown in Corollary \ref{cor:MM=000026VC2AS}, $m\left(\cdot\right)$
is $2$-almost-smooth. Therefore, using Theorem \ref{thm:Almost-Smooth}
with $c=1,d=2,C=\alpha+2$ and algorithm $\Lambda_{MV}$, we obtain
a sliding window algorithm $\Lambda$ which $\left(\left(2+\eps\right)\left(\alpha+2\right)^{2},\delt\right)$-approximate
the maximum-matching size in graphs with arboricity $\alpha$.

The space complexity of $\Lambda_{MV}$ is $s_{MV}\left(\eps,\delt\right)=\O{\eps^{-2}\log^{2}n\log\delt^{-1}}$
and it the update time is $t_{MV}\left(\eps,\delt\right)=\O{\eps^{-2}\log n\log\delt^{-1}}$.
Hence the space complexity of $\Lambda$ is
\[
\O{\eps^{-1}\log w\cdot s_{MV}\left(\eps,\frac{\eps\delt}{2w\log w}\right)}=\O{\eps^{-3}\log^{4}n\log\tfrac{1}{\eps\delt}},
\]
and similarly for the update time, where we used the fact that $w\le n^{2}$.
\end{proof}
For the purpose of approximating the maximum-matching size in graphs
with arboricity bounded by $\alpha$ Cormode et al. \cite{cormode_et_al:LIPIcs:2017:7849}
introduced the notion of $\alpha$-good edges. The algorithm of \cite{mcgregor_et_al:OASIcs:2018:8295}
used in the above proof actually approximates the maximum number of
$\alpha$-good edges in prefixes of the stream. Thus, using the same
algorithm of \cite{mcgregor_et_al:OASIcs:2018:8295}, we can directly
approximate the maximum size of the set of $\alpha$-good edges in
the active window $W$. For completeness we present here the definition
of Cormode et al. \cite{cormode_et_al:LIPIcs:2017:7849} for $\alpha$-good
edges in a stream, and the notion of $E_{\alpha}^{*}$ due to McGregor
and Vorotnikova \cite{mcgregor_et_al:OASIcs:2018:8295}.
\begin{defn}
Let $\S=\left(e_{1},e_{2},\ldots,e_{k}\right)$ be a sequence of $k$
edges on the set of vertices $V=\left[n\right]$. We say that an edge
$e_{i}=\left\{ u,v\right\} $ is $\alpha$-good (with respect to the
stream $\S$) if $d_{i}\left(u\right)\le\alpha$ and $d_{i}\left(v\right)\le\alpha$,
where $d_{i}\left(x\right)$ is the number of edges incident on the
vertex $x$ that appear after edge $e_{i}$ in the stream, i.e., $d_{i}\left(x\right)=\left|\left\{ e_{j}\mid j>i\land x\in e_{j}\right\} \right|$.
Denote by $E_{\alpha}\left(\S\right)$ the set of $\alpha$-good edges
in the stream $S$, and let $E_{\alpha}^{*}\left(\S\right)=\max\limits _{t\in\left[k\right]}\left|E_{\alpha}\left(\S_{t}\right)\right|$,
where $\S_{t}=\left(e_{1},e_{2},\ldots,e_{t}\right)$ is the prefix
of $S$ of length $t$.
\end{defn}

Although the size of the set of $\alpha$-good edges in a stream is
not smooth or even almost-smooth, the function $E_{\alpha}^{*}\left(\cdot\right)$
is almost-smooth, since it is subadditive.
\begin{lem}
\label{lem:E_almost_smoth}The function $E_{\alpha}^{*}\left(\cdot\right)$
is $2$-almost-smooth.
\end{lem}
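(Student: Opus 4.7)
The plan is to apply Lemma \ref{lem:subadditive2AS}, so I need to verify four properties for $E_\alpha^*(\cdot)$: non-negativity, boundedness, monotonicity (both left and right), and subadditivity. Non-negativity is immediate because $E_\alpha^*$ counts edges, and boundedness follows because $|E_\alpha(\mathcal{S}_t)| \le t \le n^2$ in a simple graph on $n$ vertices. The real content is the remaining two properties.

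For monotonicity I would argue as follows. Right-monotonicity ($E_\alpha^*(A) \le E_\alpha^*(AB)$) is trivial since every prefix of $A$ is also a prefix of $AB$, so the maximum in the definition of $E_\alpha^*$ can only grow. For left-monotonicity ($E_\alpha^*(B) \le E_\alpha^*(AB)$) I would use the crucial observation that if $e$ is an edge in a prefix $B_s$ of $B$, then the edges appearing after $e$ in the stream $A B_s$ are exactly the same as the edges appearing after $e$ in $B_s$ (since $A$ precedes $e$). Therefore, the after-degrees $d_i(u)$ and $d_i(v)$ are unchanged, and $e$ is $\alpha$-good in $A B_s$ iff it is $\alpha$-good in $B_s$. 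Hence $|E_\alpha(A B_s)| \ge |E_\alpha(B_s)|$, and taking the max over $s$ yields $E_\alpha^*(AB) \ge E_\alpha^*(B)$.

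The main step is subadditivity, $E_\alpha^*(AB) \le E_\alpha^*(A) + E_\alpha^*(B)$. Let $t^*$ be a prefix length of $AB$ achieving $E_\alpha^*(AB) = |E_\alpha((AB)_{t^*})|$. If $t^* \le |A|$, then $(AB)_{t^*}$ is a prefix of $A$, so $E_\alpha^*(AB) \le E_\alpha^*(A)$ and we are done. Otherwise $(AB)_{t^*} = A B_s$ for some prefix $B_s$ of $B$, and I would partition the $\alpha$-good edges of $A B_s$ into those lying in $A$ and those lying in $B_s$. Edges $e \in B_s$: by the same observation as above, if $e$ is $\alpha$-good in $AB_s$, then it is $\alpha$-good in $B_s$, contributing at most $|E_\alpha(B_s)| \le E_\alpha^*(B)$. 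Edges $e \in A$: if $e$ is $\alpha$-good in $AB_s$ then both endpoints have at most $\alpha$ later edges in $AB_s$, and restricting to edges in $A$ only decreases these counts, so $e$ is also $\alpha$-good in $A$; this contributes at most $|E_\alpha(A)| \le E_\alpha^*(A)$. Summing the two pieces gives the desired bound.

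I expect the potentially tricky point to be the $A$-side of the subadditivity argument, since one must distinguish the set of $\alpha$-good edges in $A$ (treated as its own stream) from the edges of $A$ that are $\alpha$-good inside the longer stream $AB_s$; the key is that ``later edges in $A$'' form a subset of ``later edges in $AB_s$,'' so $\alpha$-goodness transfers from the longer stream down to $A$. With all four properties verified, Lemma \ref{lem:subadditive2AS} gives $2$-almost-smoothness of $E_\alpha^*(\cdot)$.
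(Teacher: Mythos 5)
Your proof is correct and follows essentially the same route as the paper: reduce to Lemma \ref{lem:subadditive2AS}, check non-negativity, boundedness, and monotonicity directly, and establish subadditivity by taking the maximizing prefix of $AB$, splitting the $\alpha$-good edges of that prefix into those lying in $A$ and those lying in the $B$-part, and observing that the former are $\alpha$-good in $A$ alone and the latter are $\alpha$-good in the corresponding prefix of $B$ alone. The only cosmetic difference is that you case-split on whether $t^*\le|A|$ while the paper case-splits on whether $E_\alpha^*(AB)=E_\alpha^*(A)$, but these are equivalent by the right-monotonicity argument you already give.
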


\begin{proof}
Obviously $E_{\alpha}^{*}\left(\cdot\right)$ is non-negative and
bounded. It is also monotone, since it is defined by taking a maximum
of prefixes and earlier edges do not interfere with later edges being
$\alpha$-good. Hence, we are only left to show that it is indeed
subadditive. Let $A$ and $B$ be disjoint segments of the stream
$S=\left(e_{1},e_{2},\ldots,e_{k}\right)$. If $E_{\alpha}^{*}\left(AB\right)=E_{\alpha}^{*}\left(A\right)$
then obviously $E_{\alpha}^{*}\left(A\right)+E_{\alpha}^{*}\left(B\right)\ge E_{\alpha}^{*}\left(AB\right)$,
as $E_{\alpha}^{*}$ is non-negative. Otherwise, let $1\le t\le k$
be such that $e_{t}\in B$ and $E_{\alpha}^{*}\left(AB\right)=\left|E_{\alpha}\left(\left(AB\right)_{t}\right)\right|$,
then
\[
E_{\alpha}\left(\left(AB\right)_{t}\right)=\left(E_{\alpha}\left(\left(AB\right)_{t}\right)\cap A\right)\uni E_{\alpha}\left(B_{t}\right),
\]
where it is a disjoint union. Note that $E_{\alpha}\left(\left(AB\right)_{t}\right)\cap A\sseq E_{\alpha}\left(A\right)$,
as every $\alpha$-good edge from $A$ with respect to the stream
$\left(AB\right)_{t}$ is also $\alpha$-good edge in the stream $A$.
Hence $E_{\alpha}^{*}\left(\cdot\right)$ is subadditive,
\[
E_{\alpha}^{*}\left(AB\right)\le\left|E_{\alpha}\left(A\right)\right|+\left|E_{\alpha}\left(B_{t}\right)\right|\le E_{\alpha}^{*}\left(A\right)+E_{\alpha}^{*}\left(B\right).
\]
Therefore, using Lemma \ref{lem:subadditive2AS} we deduce that it
is indeed $2$-almost-smooth, as required.
\end{proof}
McGregor and Vorotnikova \cite{mcgregor_et_al:OASIcs:2018:8295} proved
that $m\left(\S\right)\le\left|E_{\alpha}\left(\S\right)\right|\le\left(\alpha+2\right)\cdot m\left(\S\right)$
for every stream $\S$, and thus also $m\left(\S\right)\le E_{\alpha}^{*}\left(\S\right)\le\left(\alpha+2\right)\cdot m\left(\S\right)$.
They also designed for $E_{\alpha}^{*}\left(\cdot\right)$ a $\left(1+\eps,\delt\right)$-approximation
algorithm with space complexity $s_{MV}\left(\eps,\delt\right)$,
as explained in Theorem \ref{thm:a^2-approx}. Since $E_{\alpha}^{*}\left(\cdot\right)$
is $2$-almost-smooth by Lemma \ref{lem:E_almost_smoth} we can apply
Theorem \ref{thm:Ext-Almost-Smooth}, with $g=E_{\alpha}^{*}\left(\cdot\right)$
and $f=m\left(\cdot\right)$, to obtain the following improvement
over Theorem \ref{thm:a^2-approx}.
\begin{thm}
[Restatement of Theorem \ref{thm:MM_SW}] \label{thm:improved_thm}
For every $\eps,\delt\in\left(0,\frac{1}{2}\right)$, there is a sliding-window
$\left(\left(2+\eps\right)\left(\alpha+2\right),\delt\right)$-estimation
algorithm for the maximum-matching size in a graph with arboricity
$\alpha$, with space bound $\O{\eps^{-3}\log^{4}n\log\tfrac{1}{\eps\delt}}$
and update time $\O{\eps^{-3}\log^{3}n\log\tfrac{1}{\eps\delt}}$.
\end{thm}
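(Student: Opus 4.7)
The plan is to apply Theorem \ref{thm:Ext-Almost-Smooth} with $f = m(\cdot)$ and $g = E_{\alpha}^{*}(\cdot)$. We already have all the ingredients assembled in the preceding discussion: Lemma \ref{lem:E_almost_smoth} gives that $g$ is $2$-almost-smooth (so $c=1$, $d=2$), and the sandwich relation of McGregor and Vorotnikova, $m(\S)\le E_{\alpha}^{*}(\S)\le(\alpha+2)m(\S)$, shows that $g$ is a $C$-approximation of $f$ with $C=\alpha+2$. The base algorithm $\Lambda$ will be the $(1+\eps,\delt)$-approximation of $E_{\alpha}^{*}(\cdot)$ from \cite{mcgregor_et_al:OASIcs:2018:8295}, amplified to failure probability $\delt$ by taking the median of $\O{\log\delt^{-1}}$ independent copies; its resources are $s(\eps,\delt)=\O{\eps^{-2}\log^{2}n\cdot\log\delt^{-1}}$ space and $t(\eps,\delt)=\O{\eps^{-2}\log n\cdot\log\delt^{-1}}$ update time.

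Next, I would substitute these parameters into Theorem \ref{thm:Ext-Almost-Smooth}. On the approximation side, the theorem yields an algorithm $\lsw$ that $(dc^{2}C(1+\O{\eps}),\delt)$-approximates $f$, which equals $(2(\alpha+2)(1+\O{\eps}),\delt)$. Rescaling $\eps$ by a suitable absolute constant then gives the claimed factor $(2+\eps)(\alpha+2)$, and only affects the subsequent bounds through the hidden $\O{\cdot}$ constants.

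For the resource bounds I would plug in the failure probability $\delt'=\tfrac{\eps\delt}{2w\log w}$ required by the framework. Using $w\le n^{2}$ yields $\log w = \O{\log n}$ and
\[
\log\tfrac{1}{\delt'}=\O{\log n + \log\tfrac{1}{\eps\delt}}=\O{\log n\cdot\log\tfrac{1}{\eps\delt}}.
\]
Thus the total space becomes
\[
\O{\eps^{-1}\log w\cdot s(\eps,\delt')}=\O{\eps^{-1}\log n\cdot\eps^{-2}\log^{2}n\cdot\log n\cdot\log\tfrac{1}{\eps\delt}}=\O{\eps^{-3}\log^{4}n\cdot\log\tfrac{1}{\eps\delt}},
\]
and an identical calculation with $t(\eps,\delt')$ in place of $s(\eps,\delt')$ gives update time $\O{\eps^{-3}\log^{3}n\cdot\log\tfrac{1}{\eps\delt}}$, matching the statement.

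I do not anticipate any real obstacle: all the heavy lifting (the almost-smooth histogram construction, the subadditivity of $E_{\alpha}^{*}$, and the approximation relation $m\le E_{\alpha}^{*}\le(\alpha+2)m$) is already done, and the proof is essentially bookkeeping. The only point to be mildly careful about is using the tighter framework of Theorem \ref{thm:Ext-Almost-Smooth} rather than Theorem \ref{thm:Almost-Smooth} directly — this is precisely what saves a factor of $\alpha+2$ in the approximation ratio compared with Theorem \ref{thm:a^2-approx}, by exploiting that the insertion-only algorithm actually $(1+\eps)$-approximates the almost-smooth surrogate $E_{\alpha}^{*}$.
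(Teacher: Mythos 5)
Your proposal matches the paper's own argument exactly: apply Theorem \ref{thm:Ext-Almost-Smooth} with $g=E_{\alpha}^{*}(\cdot)$, $f=m(\cdot)$, $c=1$, $d=2$, $C=\alpha+2$, using the McGregor--Vorotnikova algorithm (amplified by a median trick) as the base insertion-only algorithm, and the sandwich $m\le E_{\alpha}^{*}\le(\alpha+2)m$. The paper leaves the parameter substitution and resource bookkeeping implicit; you spell it out correctly, including the use of $w\le n^{2}$ to absorb $\log w$ into $\log n$ and the bound $\log n+\log\tfrac{1}{\eps\delt}=\O{\log n\cdot\log\tfrac{1}{\eps\delt}}$.
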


\begin{rem}
\label{rem:2_app_of_E^*}For arboricity $\alpha=1$ we can achieve
better approximation ratio. Cormode et al. \cite{cormode_et_al:LIPIcs:2017:7849}
showed that in this case $m\left(\S\right)\le\left|E_{1}\left(\S\right)\right|\le2\cdot m\left(\S\right)$
and thus $m\left(\S\right)\le E_{1}^{*}\left(\S\right)\le2\cdot m\left(\S\right)$.
Therefore, by Theorem \ref{thm:Ext-Almost-Smooth} there is a $\left(4+\eps,\delt\right)$-approximation
algorithm for the maximum-matching size in forest graphs in the sliding-window
model with the same space and update time bounds.
\end{rem}

\section{Applications to Minimum Vertex-Cover}

We show here few results for minimum vertex-cover (again in the sliding-window
model), based on its relationship to maximum and maximal matching,
and the fact that it is also almost smooth, see Corollary \ref{cor:MM=000026VC2AS}.
We start by showing an algorithm with approximation factor $3.125+\eps$
for the size of a minimum vertex-cover in VDP graphs using $\Ot{\sqrt{n}}$
space. We continue and present another algorithm for a larger family
of graphs, namely, forest graphs, where the approximation factor grows
to $4+\eps$ but the space complexity reduces to $\polylog\left(n\right)$.
We then proceed to show how to report a feasible vertex cover. We
reproduce a known algorithm for general graphs with approximation
factor $8+\eps$ that computes a vertex cover using $\Ot n$ space.
Then we show how to improve the approximation factor to $4+\eps$
by a tighter analysis of that same algorithm, using that the size
of a greedy maximal matching is also almost smooth.

There are two different but related problems to consider. The first
one is estimating the size of the minimum vertex-cover (without providing
a corresponding vertex cover of that size), and the second one is
computing a feasible vertex-cover of approximately minimum size.

Recall that the minimum vertex-cover size is almost-smooth, since
it is subadditive, as shown in Corollary \ref{cor:MM=000026VC2AS}.
Hence, we can use the Almost-Smooth-Histogram approach to estimate
the size of the minimum vertex-cover in the sliding-window model,
as explain in the next section.

\subsection{Vertex-Cover Estimation}

First we consider estimating the size of the minimum vertex cover
in the sliding-window model. We provide the first sub-linear space
algorithm in the sliding-window model for estimating $VC\left(\cdot\right)$,
for some families of graphs, as explained below.

A graph $\G$ is said to be VDP (stands for vertex-disjoint paths)
if $G$ is a union of vertex disjoint paths. We show two sliding-window
algorithms for different families of graphs. One with $\Ot{\sqrt{n}}$
space obtaining almost $3.125$-approximation for the family of VDP
graphs and the other one with $\polylog\left(n\right)$ space obtaining
almost $4$-approximation for graphs of arboricity $\alpha=1$. Observe
that the results are incomparable, since the first algorithm has better
approximation ratio but its space complexity is much bigger. Also,
the second algorithm is applicable for a larger family of graphs.

For the family of VDP graphs there is a randomized algorithm in the
turnstile streaming model to approximate $VC\left(\cdot\right)$,
presented in \cite{Otniel16}. Using the standard argument of computing
a median of $\log\delt^{-1}$ parallel repetitions, to achieve low
failure probability $\delt$, we can state this result as follows.
\begin{thm}
\cite[Theorem 1.1]{Otniel16} For every $\eps\in\left(0,\frac{1}{2}\right),\delt\in\left(0,1\right)$,
there exists a turnstile $\left(\frac{5}{4}+\eps,\delt\right)$-approximation
streaming algorithm for $VC\left(\cdot\right)$ in VDP graphs with
space bound $\O{\eps^{-1}\sqrt{n}\log^{2}n\log\delt^{-1}}$.
\end{thm}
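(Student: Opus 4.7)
Since VDP graphs are bipartite, K\H{o}nig's theorem gives $VC(G)=m(G)$, and on a path of $k$ edges the matching size equals $\lceil k/2\rceil$. Writing $k_1,\ldots,k_p$ for the edge-lengths of the components, $m=\sum_i k_i$ for the total number of edges, and $p_{\mathrm{odd}}$ for the number of paths of odd length, one has
\[
VC(G)=\tfrac12\bigl(m+p_{\mathrm{odd}}\bigr).
\]
The plan is therefore to turnstile-estimate $m$ (which is easy) and $p_{\mathrm{odd}}$ (the main challenge) and plug the estimates into this identity.

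First, I would maintain three linear sketches in parallel, each of size $\O{\eps^{-1}\sqrt n\log^2n\log\delt^{-1}}$: (i) a distinct-item counter that returns a $(1\pm\eps)$-estimate $\hat m$ of $m$; (ii) an $\ell_0$-sparse-recovery sketch with capacity $K=\Theta(\eps^{-1}\sqrt n)$, which recovers the full edge set whenever $m\le K$; (iii) a vertex-sub\-sampling sketch in which each vertex is hashed into a set $S$ of expected size $\Theta(\eps^{-1}\sqrt n)$, with every $v\in S$ keeping an $\ell_0$-sampler of its incident edges (this recovers both neighbors exactly, because in VDP graphs the degree is at most $2$). Then I would branch on $\hat m$: if $\hat m\le K$ then sketch~(ii) has already recovered the whole graph and we output $VC(G)$ exactly; otherwise $VC\ge m/2=\Omega(\eps^{-1}\sqrt n)$, so an additive error of $\eps\sqrt n$ in the estimate of $p_{\mathrm{odd}}$ translates into a $(1+\eps)$-multiplicative error in $VC$.

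The additive estimate of $p_{\mathrm{odd}}$ is obtained from sketch~(iii) by traversing, from every sampled vertex, the unique path containing it and reading off its parity; scaling the empirical number of odd sampled paths by the inverse sampling rate gives an unbiased estimator, and standard Chernoff concentration over the $\Theta(\eps^{-1}\sqrt n)$ sampled paths yields additive error $\eps\sqrt n$ with failure probability $\delt$. The $(5/4+\eps)$ approximation ratio arises from combining this additive estimate (which can over- or under-shoot $p_{\mathrm{odd}}$) with the deterministic bound $p_{\mathrm{odd}}\le p'=n_1/2$, where $p'$ is the total number of nontrivial paths (also computable from the sample); the worst case for the ratio is an instance with few but long paths in which $p_{\mathrm{odd}}$ is much smaller than $p'$, forcing a one-sided bias on the estimator.

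The main obstacle is the path-traversal step, because independent hashing of vertices makes a length-$k$ path entirely sampled with probability $p_{\mathrm{samp}}^k$, which is negligible as soon as $k\ge2$. The fix is a correlated min-wise hash scheme that places an entire component into $S$ as soon as any one of its vertices is picked --- implementable in $\Ot{\sqrt n}$ space because the walk from a sampled vertex uncovers the next vertex via its $\ell_0$-sampler in $\O{1}$ queries, so a path of length $k$ is reconstructed in $\O{k}$ queries. Proving concentration of $\hat p_{\mathrm{odd}}$ under this correlated sampling, and controlling the $\ell_0$-samplers' failure probabilities by a union bound over the $\pl$ queries made during the walks, is what introduces the extra $\log^2n\log\delt^{-1}$ factor in the final space bound; together these steps yield the stated $(5/4+\eps,\delt)$-approximation.
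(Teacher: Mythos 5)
The statement you are proving is only \emph{cited} in this paper (from van Handel's thesis, \cite{Otniel16}); the paper itself supplies no proof, so there is nothing internal to compare against. Judged on its own terms, your sketch has a genuine gap in its central step, and it never actually derives the constant $\tfrac54$.

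The decomposition $VC(G)=\tfrac12\bigl(m+p_{\mathrm{odd}}\bigr)$ is correct, and estimating $m$ is routine. The problem is your estimator for $p_{\mathrm{odd}}$. Keeping an $\ell_0$-sampler only for vertices in a hashed set $S$ lets you recover the (at most two) neighbors of a sampled vertex $v\in S$, but not the neighbors of \emph{those} neighbors: to continue the walk past $u$ you need $u$'s own sampler, and that sampler only exists if $u\in S$, a decision that had to be made before the stream arrived. Your ``correlated min-wise hash'' fix does not escape this chicken-and-egg problem --- whether $u$ belongs to the same component as a sampled $v$ is exactly the graph structure that is unknown until after the stream, so there is no way to commit, during the turnstile stream, to maintaining samplers for all of $v$'s component without maintaining them for all of $V$. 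Consequently the walk from a sampled vertex dies after one hop with high probability on any path of length $\omega(1)$, and the claimed unbiased estimator of $p_{\mathrm{odd}}$ is not implementable in $\Ot{\sqrt n}$ space as described. Separately, you state the $(\tfrac54+\eps)$ ratio as a target but never exhibit the case analysis that produces it: the identity $VC=\tfrac12(m+p_{\mathrm{odd}})$ together with the crude bound $0\le p_{\mathrm{odd}}\le p$ only yields a $\tfrac32$-type estimate, and your appeal to a ``one-sided bias'' and to $p_{\mathrm{odd}}\le n_1/2$ is too vague to recover the constant. A workable route would more likely avoid long-range path traversal entirely --- e.g., estimating $m$ together with the counts of constant-length paths (paths of length $\le T=O(1/\eps)$, whose parities you can read off locally) and then bounding the contribution of longer paths by $\lceil k/2\rceil\le(1+1/T)\cdot k/2$ --- but that is a different algorithm and analysis from the one you propose.
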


Using Theorem \ref{thm:Almost-Smooth}, with $c=1,d=2$ and $C=\frac{5}{4}$
(thus $dc^{2}C^{2}=3.125$), we obtain as a corollary the following
result for the sliding-window model.
\begin{thm}
\label{thm:SW-forVC-sqrt} For every $\eps\in\left(0,\frac{1}{2}\right),\delt\in\left(0,1\right)$,
there exists a sliding-window $\left(3.125+\eps,\delt\right)$-approximation
algorithm for $VC\left(\cdot\right)$ in VDP graphs with space bound
$\O{\eps^{-2}\sqrt{n}\log^{4}n\log\frac{1}{\eps\delt}}$.
\end{thm}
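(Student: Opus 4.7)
The plan is to apply the generic sliding-window reduction of Theorem \ref{thm:Almost-Smooth} to the turnstile estimator of \cite{Otniel16}, instantiating the parameters for $VC(\cdot)$ on VDP graphs and then simplifying the resulting bounds. Since a turnstile streaming algorithm is in particular an insertion-only streaming algorithm, the estimator from \cite[Theorem 1.1]{Otniel16} satisfies the black-box hypothesis of Theorem \ref{thm:Almost-Smooth} with approximation $C = \tfrac{5}{4}$, space $s(\eps,\delt) = \O{\eps^{-1}\sqrt{n}\log^{2}n\log\delt^{-1}}$, and a corresponding update time. By Corollary \ref{cor:MM=000026VC2AS} the function $VC(\cdot)$ is $2$-almost-smooth, so Theorem \ref{thm:Almost-Smooth} applies with $c=1$ and $d=2$.

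Plugging these parameters into Theorem \ref{thm:Almost-Smooth} yields a sliding-window algorithm whose approximation ratio is
\[
d c^{2} C^{2}\bigl(1+\O{\eps}\bigr)=2\cdot 1\cdot\left(\tfrac{5}{4}\right)^{2}\bigl(1+\O{\eps}\bigr)=\tfrac{25}{8}+\O{\eps}=3\tfrac{1}{8}+\O{\eps},
\]
and whose failure probability is $\delt$. A standard rescaling of $\eps$ by a constant absorbs the hidden constant and produces the stated $\bigl(3\tfrac{1}{8}+\eps,\delt\bigr)$-approximation.

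For the space bound, Theorem \ref{thm:Almost-Smooth} gives
\[
\O{\eps^{-1}\log w\cdot\bigl(s\bigl(\eps,\tfrac{\eps\delt}{2w\log w}\bigr)+\log w\bigr)}.
\]
Substituting the space of the turnstile estimator and using $w\le n^{2}$,
\[
s\bigl(\eps,\tfrac{\eps\delt}{2w\log w}\bigr)=\O{\eps^{-1}\sqrt{n}\log^{2}n\cdot\log\tfrac{2w\log w}{\eps\delt}}=\O{\eps^{-1}\sqrt{n}\log^{2}n\cdot\bigl(\log n+\log\tfrac{1}{\eps\delt}\bigr)}.
\]
Multiplying by the outer $\O{\eps^{-1}\log w}=\O{\eps^{-1}\log n}$ factor and collecting the $\log n$ terms inside a single $\log\tfrac{1}{\eps\delt}$ factor yields the claimed $\O{\eps^{-2}\sqrt{n}\log^{4}n\log\tfrac{1}{\eps\delt}}$ space bound.

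This proof is essentially a parameter substitution, so there is no real obstacle beyond bookkeeping; the only subtlety is to observe that the turnstile algorithm of \cite{Otniel16} is immediately usable as a black-box insertion-only subroutine inside the almost-smooth-histogram framework, and to verify that the arithmetic $2\cdot(5/4)^{2}=25/8$ produces exactly the advertised $3\tfrac{1}{8}$ constant.
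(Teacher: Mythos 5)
Your proposal is correct and is exactly the paper's argument: the paper also derives this theorem by a direct application of Theorem \ref{thm:Almost-Smooth} to the $(\tfrac{5}{4}+\eps,\delt)$-estimator of \cite{Otniel16}, using $d=2$, $c=1$, $C=\tfrac{5}{4}$ from the $2$-almost-smoothness of $VC(\cdot)$ established in Corollary \ref{cor:MM=000026VC2AS}, and bounding $w\le n^{2}$. The paper leaves the arithmetic implicit, while you spell out the $2\cdot(5/4)^{2}=25/8$ computation and the space substitution; the content is the same.
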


Observe that a VDP graph has arboricity $\alpha=1$, because it is
a forest, and in particular it is a bipartite graph. Recall that according
to K\H{o}nig's theorem, in a bipartite graph the size of a minimum
vertex cover equals the size of a maximum-matching. Therefore, we
conclude from Remark \ref{rem:2_app_of_E^*} that there is a $\left(4+\eps,\delt\right)$-approximation
algorithm for the minimum vertex cover in VDP graphs using $\polylog$
space. Obviously it extends to all forests, i.e., graphs with arboricity
$\alpha=1$. Comparing to Theorem \ref{thm:SW-forVC-sqrt}, the following
theorem has slightly worse approximation factor but its space complexity
is much better, moreover, its applicable for a wider family of graphs.
\begin{thm}
\label{thm:SW-forVC-polylog} For every $\eps,\delt\in\left(0,\frac{1}{2}\right)$,
there is a sliding-window $\left(4+\eps,\delt\right)$-approximation
algorithm for the size of the minimum vertex-cover size in a forest
graph, with space bound $\O{\eps^{-3}\log^{4}n\log\tfrac{1}{\eps\delt}}$
and update time $\O{\eps^{-3}\log^{3}n\log\tfrac{1}{\eps\delt}}$.
\end{thm}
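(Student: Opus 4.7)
The plan is to reduce the problem directly to sliding-window maximum-matching estimation on forests, which is already established in Remark \ref{rem:2_app_of_E^*}. Every forest is bipartite, and in the sliding-window setting the graph $W$ defined by the active window is a subgraph of the underlying forest, hence itself a forest. K\H{o}nig's theorem therefore yields $VC(W)=m(W)$. Consequently, any value $\Lambda$ that $(4+\eps,\delt)$-approximates $m(W)$ automatically satisfies the same guarantee for $VC(W)$, and the algorithm simply outputs $\Lambda$ as its estimate of the minimum vertex-cover size.

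To produce the required matching estimator I would invoke the construction outlined in Remark \ref{rem:2_app_of_E^*}: starting from the McGregor--Vorotnikova $(1+\eps,\delt')$-approximation of $E_{1}^{*}(\cdot)$ from \cite{mcgregor_et_al:OASIcs:2018:8295}, and using that $E_{1}^{*}(\cdot)$ is $2$-almost-smooth (Lemma \ref{lem:E_almost_smoth}) together with the arboricity-$1$ bound $m(\S)\le E_{1}^{*}(\S)\le 2\cdot m(\S)$ from \cite{cormode_et_al:LIPIcs:2017:7849}. Applying Theorem \ref{thm:Ext-Almost-Smooth} with $c=1$, $d=2$, $C=2$, $g=E_{1}^{*}(\cdot)$, $f=m(\cdot)$, then yields a sliding-window algorithm that $(4(1+\O{\eps}),\delt)$-approximates $m(\cdot)$ on forest streams, with space $\O{\eps^{-3}\log^{4}n\log\tfrac{1}{\eps\delt}}$ and update time $\O{\eps^{-3}\log^{3}n\log\tfrac{1}{\eps\delt}}$ (the bounds are inherited verbatim from Theorem \ref{thm:MM_SW}). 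A final rescaling of $\eps$ by a suitable constant absorbs the $\O{\eps}$ into the target ratio $4+\eps$.

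There is essentially no technical obstacle here: the argument is a short pipeline combining three ingredients (K\H{o}nig's theorem, the almost-smooth-histogram reduction of Theorem \ref{thm:Ext-Almost-Smooth}, and the $\alpha=1$ bound on $E_{1}^{*}$). The only step that merits explicit mention is that the forest assumption is preserved under restriction to any active window, which is immediate because every subgraph of a forest is a forest; this is what allows K\H{o}nig's theorem to be applied to $W$ at every query time, not merely to the underlying stream as a whole.
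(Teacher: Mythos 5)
Your proof is correct and takes essentially the same route as the paper: apply K\H{o}nig's theorem to reduce $VC(\cdot)$ to $m(\cdot)$ on the (bipartite) forest, then invoke the $(4+\eps,\delt)$-approximation for maximum-matching size in forests from Remark \ref{rem:2_app_of_E^*}, which is obtained by applying Theorem \ref{thm:Ext-Almost-Smooth} with $g=E_{1}^{*}$, $c=1$, $d=2$, $C=2$. The one point you spell out that the paper leaves implicit --- that every subsegment of a forest stream still induces a forest, so K\H{o}nig's theorem applies at every query --- is a worthwhile clarification but does not change the argument.
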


\begin{rem}
\label{rem:VCforForests}According to the previous paragraph and Remark
\ref{rem:2_app_of_E^*}, the algorithm of Cormode et al. \cite{cormode_et_al:LIPIcs:2017:7849}
$\left(2+\eps\right)$-approximates the minimum vertex-cover size
in forest graphs (arboricity $\alpha=1$) in the insertion-only model
with space $\O{\eps^{-2}\log^{2}n}$ and update time $\O{\eps^{-2}\log n}$.
\end{rem}

\subsection{Vertex-Cover Approximation}

Here we consider computing a feasible vertex cover of approximately
minimum size. We improve the approximation ratio of the algorithm
of \cite{Otniel16} from $8+\eps$ to $4+\eps$, using a tighter analysis
of his algorithm.

A $\emph{maximal matching}$ is a matching that cannot be extended
by adding an edge to it, i.e., a matching $M$ in a graph $\G$ is
maximal if every edge $e\in E\backslash M$ is adjacent to at least
one edge from the matching $M$. For a stream $A$ of edge insertions,
denote by $\widehat{M}\left(A\right)$ the greedy matching on $A$,
and denote by $\widehat{m}\left(A\right)$ its size. Notice that for
every stream $A$ the greedy matching $\widehat{M}\left(A\right)$
is maximal. Recall that for a matching $M$ we denote by $V\left(M\right)$
the set of all endpoints of edges from $M$, i.e., $V\left(M\right)=\left\{ v\in V\mid\exists u\in V,\left\{ v,u\right\} \in M\right\} $.

We first show that the greedy-matching size of a stream of edge insertions
is almost-smooth. The proof is similar in nature to that of Corollary
\ref{cor:MM=000026VC2AS}, but different because $\widehat{m}\left(\cdot\right)$
is not left-monotone, but rather $2$-left-monotone. Furthermore,
we can use the actual matching, since it is well structured.
\begin{lem}
\label{lem:greedy-matching_(2,2)-as}The greedy-matching size is $\left(2,2\right)$-almost-smooth.
\end{lem}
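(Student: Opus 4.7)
My plan is to verify each of the four clauses of Definition~\ref{def:ASfunction} for the greedy-matching size $\widehat{m}(\cdot)$. Non-negativity and boundedness are immediate, since $0 \le \widehat{m}(S) \le n/2$ for every stream $S$. The $c=2$ left-monotone clause follows from the standard fact that any maximal matching is a $2$-approximation to a maximum matching, applied to $\widehat{M}(AB)$:
\[
\widehat{m}(B) \;\le\; m(B) \;\le\; m(AB) \;\le\; 2\,\widehat{m}(AB).
\]
The bulk of the work is the almost-smoothness clause.

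The key structural observation is that the greedy rule is online: for any partition $S = PQ$ of the stream, we have $\widehat{M}(S) = \widehat{M}(P) \cup M$ where $M \subseteq Q$ collects exactly those edges of $Q$ that the greedy rule accepts given $V(\widehat{M}(P))$ is already occupied, so $V(M) \cap V(\widehat{M}(P)) = \emptyset$. Applying this decomposition with $P = A$ inside $AB$, and then with $P = AB$ inside $ABC$, I write
\[
\widehat{M}(AB) \;=\; \widehat{M}(A) \cup M_2, \qquad \widehat{M}(ABC) \;=\; \widehat{M}(AB) \cup M_3,
\]
where $M_2 \subseteq B$, $M_3 \subseteq C$, and $V(M_3) \cap V(\widehat{M}(AB)) = \emptyset$. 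In particular, $V(M_2) \cap V(M_3) = \emptyset$, so $M_2 \cup M_3$ is itself a matching in the graph induced by the substream $BC$.

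The pivotal quantitative step is then: since $\widehat{M}(BC)$ is a maximal matching in $BC$, it is a $2$-approximation to $m(BC)$, and therefore
\[
|M_2| + |M_3| \;=\; |M_2 \cup M_3| \;\le\; m(BC) \;\le\; 2\,\widehat{m}(BC).
\]
Plugging this into $\widehat{m}(ABC) = \widehat{m}(A) + |M_2| + |M_3|$ and using $\widehat{m}(A) \le \widehat{m}(AB)$ (as $\widehat{M}(A) \subseteq \widehat{M}(AB)$), I can bound $\widehat{m}(ABC)$ in terms of $\widehat{m}(AB)$ and $\widehat{m}(BC)$. I will then invoke the equivalent condition from Remark~\ref{rem:ASfunctionRem}: assuming $\eps\,\widehat{m}(AB) \le \widehat{m}(B)$, and using the monotonicity $\widehat{m}(B) \le \widehat{m}(BC)$ (again because $\widehat{M}(B) \subseteq \widehat{M}(BC)$), I arrive at an inequality of the form $\eps\,\widehat{m}(ABC) \le d\cdot \widehat{m}(BC)$.

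The main obstacle is the final arithmetic of pinning down the constant $d=2$ from these ingredients: the factor $2$ comes directly from $|M_2|+|M_3| \le 2\,\widehat{m}(BC)$, whereas the hypothesis $\eps\,\widehat{m}(AB) \le \widehat{m}(B)$ is what neutralizes the additive $\widehat{m}(A)$ contribution; a careful balancing of these two bounds, using that $M_2$ appears both in $\widehat{m}(AB)$ and inside the matching $M_2 \cup M_3$ (so the same $|M_2|$ can be charged twice), is needed to close the argument.
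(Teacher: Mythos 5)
Your verification of non-negativity, boundedness, and $2$-left-monotonicity is fine, and the decomposition $\widehat{M}\left(ABC\right)=\widehat{M}\left(A\right)\cup M_{2}\cup M_{3}$ with $M_{2}\cup M_{3}$ a matching supported on the substream $BC$ is a correct observation. But the key quantitative step you rely on --- $\left|M_{2}\right|+\left|M_{3}\right|\le m\left(BC\right)\le2\widehat{m}\left(BC\right)$ --- is too lossy to yield $d=2$, and the ``main obstacle'' you flag at the end is not an arithmetic detail but a genuine gap that cannot be closed from these ingredients. Concretely, combining $\widehat{m}\left(ABC\right)=\widehat{m}\left(A\right)+\left|M_{2}\right|+\left|M_{3}\right|\le\widehat{m}\left(A\right)+2\widehat{m}\left(BC\right)$ with the hypothesis $\eps\widehat{m}\left(A\right)\le\eps\widehat{m}\left(AB\right)\le\widehat{m}\left(B\right)\le\widehat{m}\left(BC\right)$ gives only
\[
\eps\widehat{m}\left(ABC\right)\le\widehat{m}\left(BC\right)+2\eps\widehat{m}\left(BC\right)=\left(1+2\eps\right)\widehat{m}\left(BC\right),
\]
which is $\le2\widehat{m}\left(BC\right)$ only when $\eps\le\frac{1}{2}$; for $\eps$ close to $1$ (which the premise allows, e.g.\ when $\widehat{m}\left(B\right)=\widehat{m}\left(AB\right)$) you only get $d=3$. ``Charging $\left|M_{2}\right|$ twice'' does not save this, because there is no lower bound on $\left|M_{2}\right|$ in terms of $\widehat{m}\left(BC\right)$: take $A$ to contain, for each edge of $B$, an earlier edge covering one of its endpoints, so that $M_{2}=\emptyset$ while $\widehat{m}\left(BC\right)$ is large.

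What you actually need is the sharper inequality $\left|M_{3}\right|\le\widehat{m}\left(BC\right)$ (no factor of $2$), equivalently $\widehat{m}\left(ABC\right)\le\widehat{m}\left(AB\right)+\widehat{m}\left(BC\right)$, which is exactly what the paper uses: then $2\widehat{m}\left(BC\right)\ge\widehat{m}\left(B\right)+\widehat{m}\left(BC\right)\ge\eps\widehat{m}\left(AB\right)+\widehat{m}\left(BC\right)\ge\eps\left(\widehat{m}\left(AB\right)+\widehat{m}\left(BC\right)\right)\ge\eps\widehat{m}\left(ABC\right)$. So the difference between your route and the paper's is not cosmetic: your bound via $m\left(BC\right)\le2\widehat{m}\left(BC\right)$ is a factor of $2$ weaker than the inequality that the argument actually needs. (As a side note, the paper's edge-by-edge justification --- that every edge of $\widehat{M}\left(ABC\right)$ coming from $C$ lies in $\widehat{M}\left(BC\right)$ --- is itself not literally true, since such an edge can be blocked in the greedy run on $BC$ by an earlier acceptance that does not occur in the run on $ABC$; the cardinality inequality $\widehat{m}\left(ABC\right)\le\widehat{m}\left(AB\right)+\widehat{m}\left(BC\right)$ nevertheless holds, e.g.\ by verifying that the potential $\left(\left|M^{ABC}\right|-\left|M^{BC}\right|\right)+\left|V\left(M^{BC}\right)\setminus V\left(M^{ABC}\right)\right|$ is non-increasing as the two greedy runs process the shared suffix $BC$.)
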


\begin{proof}
The first and third requirements are clear, as $\widehat{m}\left(\cdot\right)$
is non-negative and bounded. For the $2$-monotonicity, let $A,B$
be disjoint segments of the stream. Note that for every $e\in\widehat{M}\left(B\right)$
at least one of its endpoints is in $V\left(\widehat{M}\left(AB\right)\right)$,
hence $\widehat{m}\left(B\right)\le\left|V\left(\widehat{M}\left(AB\right)\right)\right|=2\cdot\widehat{m}\left(AB\right)$.

To prove the almost-smoothness property, let $A,B$ and $C$ be disjoint
segments of the stream, and suppose $\eps\widehat{m}\left(AB\right)\le\widehat{m}\left(B\right)$
for some $\eps\in\left(0,1\right)$. We first claim that
\[
\widehat{m}\left(ABC\right)\le\widehat{m}\left(AB\right)+\widehat{m}\left(BC\right).
\]
The lefthand-side counts the edges in $\widehat{M}\left(ABC\right)$,
which can be partitioned into edges from $AB$ and from $C$. The
former set is, by construction, $\widehat{M}\left(ABC\right)\cap AB=\widehat{M}\left(AB\right)$,
hence we actually need to show that
\[
\left|\widehat{M}\left(ABC\right)\cap C\right|\le\left|\widehat{M}\left(BC\right)\right|.
\]
To prove this, we map each edge $e\in\widehat{M}\left(ABC\right)\cap C$
to an edge $f\left(e\right)\in\widehat{M}\left(BC\right)$, defined
as the first (earliest) edge in the stream $BC$ that intersects $e$
(i.e., shares at least one endpoint with $e$, possibly $e$ itself).
Observe that $f\left(e\right)$ is well-defined because the greedy
matching $\widehat{M}\left(BC\right)$ considers at some point the
edge $e\in C$ itself; moreover, if $e\in\widehat{M}\left(BC\right)$
then $f\left(e\right)=e$, and otherwise $f\left(e\right)\neq e$.
Assume for contradiction that $f\left(e_{1}\right)=f\left(e_{2}\right)$
for two distinct edges $e_{1}\neq e_{2}\in\widehat{M}\left(ABC\right)\cap C$,
and without loss of generality, let $e_{1}$ appear before $e_{2}$
in the stream $C$. Observe that $f\left(e_{1}\right)\notin\widehat{M}\left(ABC\right)$
because it intersects two distinct edges $e_{1},e_{2}$ in that same
matching $\widehat{M}\left(ABC\right)$ (having two distinct edges
handles the possibility that $f\left(e_{1}\right)$ is one of $e_{1},e_{2}$).
This means that there is some $e_{0}\in\widehat{M}\left(ABC\right)$
that intersects $f\left(e_{1}\right)$ and appears before it in the
stream $ABC$. Notice that this $e_{0}\neq e_{1},e_{2}$ because $e_{0}$
appears before $f\left(e_{1}\right)$ (in $ABC$) which in turn appears
no later than $e_{1}$ and $e_{2}$ (in $BC$). We thus have three
distinct edges $e_{0},e_{1},e_{2}\in\widehat{M}\left(ABC\right)$,
all intersecting $f\left(e_{1}\right)$, which has only two endpoints,
therefore at least two of those three edges must intersect each other,
in contradiction to $\widehat{M}\left(ABC\right)$ being a matching.
This implies that $f$ is injective, and consequently the claimed
inequality.

Let us now complete the proof of the lemma. By construction, $\widehat{m}\left(B\right)\le\widehat{m}\left(BC\right)$,
and we conclude that
\[
\begin{alignedat}{1}2\widehat{m}\left(BC\right)\ge & \widehat{m}\left(B\right)+\widehat{m}\left(BC\right)\ge\eps\widehat{m}\left(AB\right)+\widehat{m}\left(BC\right)\\
\ge & \eps\left(\widehat{m}\left(AB\right)+\widehat{m}\left(BC\right)\right)\ge\eps\widehat{m}\left(ABC\right).
\end{alignedat}
\]
\end{proof}
We proceed to show an approximation algorithm for minimum vertex-cover
using our almost-smoothness framework, as stated in the next theorem.
Its proof basically transforms an approximation algorithm for maximum-matching
to one for minimum vertex-cover. The proof starts with a general technique,
and then refines the analysis by exploiting the specifics of our setting.
\begin{thm}
\label{thm:SW_VC} For every $\eps\in\left(0,\frac{1}{2}\right)$,
there is a sliding-window $\left(4+\eps\right)$-approximation algorithm
for the minimum vertex cover with space bound $\O{\eps^{-1}n\log^{2}n}$.
\end{thm}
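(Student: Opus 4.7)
The plan is to analyze the modification of the Crouch et al.\ algorithm that outputs the vertex set $V(\widehat{M}(B_1))$ of the greedy matching of the oldest live bucket $B_1$. The algorithm itself is unchanged: it maintains buckets $B_1\supsetneq B_2\supsetneq\cdots$ in stream order (oldest first), each storing the greedy matching on its segment, and a smooth-histogram-style compression rule (delete $B_{i+1}$ whenever $2\widehat{m}(B_{i+2})\ge(1-\eps_0)\widehat{m}(B_i)$, for $\eps_0=\Theta(\eps)$ chosen below) preserves the bucket invariant $2\widehat{m}(B_{i+1})\ge(1-\eps_0)\widehat{m}(B_i)$ and the sandwich $B_2\subseteq W\subseteq B_1$. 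Correctness (feasibility) of the output is immediate: $\widehat{M}(B_1)$ is maximal in $B_1\supseteq W$, so $V(\widehat{M}(B_1))$ covers every edge of $W$.

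For the approximation ratio, the output has size $|V(\widehat{M}(B_1))|=2\widehat{m}(B_1)$, and the proof reduces to the chain
\[
2\widehat{m}(B_1)\;\le\;\frac{4}{1-\eps_0}\,\widehat{m}(B_2)\;\le\;\frac{4}{1-\eps_0}\,VC(B_2)\;\le\;\frac{4}{1-\eps_0}\,VC(W).
\]
The first inequality is the bucket invariant. The second uses that any vertex cover of $B_2$ must contain at least one endpoint of every edge of the matching $\widehat{M}(B_2)$, hence $\widehat{m}(B_2)\le VC(B_2)$. The third is monotonicity of $VC(\cdot)$ under segment extension, since $B_2\subseteq W$. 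Setting $\eps_0=\eps/(4+\eps)$ makes the right-hand side at most $(4+\eps)\,VC(W)$. This is the tightening over the $(6+\eps)$ analysis: we bound $\widehat{m}(B_2)$ \emph{directly} by $VC(W)$, avoiding a lossy detour through $\widehat{m}(W)$ and the factor-$2$ left-monotonicity slack of $\widehat{m}$.

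For the space bound, the $(2,2)$-almost-smoothness of $\widehat{m}(\cdot)$ proven in the preceding lemma is what guarantees that the compression rule keeps at most $O(\eps^{-1}\log w)$ live buckets at any time, by exactly the counting argument that underlies the proof of Theorem \ref{thm:Almost-Smooth}; each bucket stores a matching of at most $n/2$ edges, described in $O(\log n)$ bits each, for a total of $O(\eps^{-1}n\log n\log w)=O(\eps^{-1}n\log^2 n)$ bits, using $w\le n^2$. The main obstacle I anticipate is algorithmic rather than analytic: verifying that under $(2,2)$-almost-smoothness --- where the slack parameter $d=2$ yields a weaker geometric decay of $\widehat{m}(B_i)$ across surviving buckets than in the classical smooth case --- the compression rule still fits $O(\eps^{-1}\log w)$ buckets and preserves the bucket invariant under every future extension of the stream. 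The approximation chain itself is short and the per-bucket storage is routine.
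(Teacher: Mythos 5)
Your proof is correct and matches the paper's approach: both output $V(\widehat{M}(B_1))$ from the Crouch et al.\ bucket structure and chain $2\widehat{m}(B_1)\le\frac{4}{1-\eps}\widehat{m}(B_2)\le\frac{4}{1-\eps}VC(B_2)\le\frac{4}{1-\eps}VC(W)$, bounding $\widehat{m}(B_2)$ directly by the vertex-cover size rather than passing through $\widehat{m}(W)$ and paying the factor-$2$ left-monotonicity slack. The bucket-count concern you flag at the end is not a new obligation you create by changing the output: you run the Crouch et al.\ algorithm unchanged (only the reported object differs), so their invariant maintenance and $O(\eps^{-1}\log w)$ bucket bound carry over verbatim, exactly as the paper relies on them.
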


\begin{rem}
\label{rem:feasible_solution_algorithm}Below, we wish to use the
algorithm presented in Theorem \ref{thm:Almost-Smooth}, however that
algorithm returns a number, while we need an algorithm that returns
a feasible solution. This disparity is easy to resolve for greedy-matching,
by simply outputting the underlying vertices of the greedy-matching
associated with bucket $B_{1}$, instead of outputting only its size.
\end{rem}

\begin{proof}
[Proof of Theorem \ref{thm:SW_VC}]If $M^{*}\sseq E$ is a maximal
matching in the graph $\G$ then the set of vertices $V\left(M^{*}\right)$
is a vertex cover of the graph $G$, because every edge from $E$
has at least one of its endpoints in $V\left(M^{*}\right)$ (otherwise
the matching $M^{*}$ would not be maximal). For every stream $A$
the greedy matching $\widehat{M}\left(A\right)$ is a maximal matching
and thus $V\left(\widehat{M}\left(A\right)\right)$ is a vertex cover
of the edges from $A$. Hence, we refer to the greedy-matching algorithm
also as the greedy vertex cover algorithm, with the only difference
that it outputs the vertices $V\left(\widehat{M}\left(A\right)\right)$
of the matching, instead of the edges $\widehat{M}\left(A\right)$
of the matching.

The greedy vertex cover algorithm achieves $2$-approximation in the
standard insertion-only streaming model for the minimum vertex cover
using $\O{n\log n}$ space, because at least one vertex from each
matched edge must be in the minimum vertex cover. By using this greedy
algorithm and the $2$-almost-smoothness of the minimum vertex cover
size, we can deduce from the variant of Theorem \ref{thm:Almost-Smooth}
discussed in Remark \ref{rem:feasible_solution_algorithm}, an $\left(8+\eps\right)$-approximation
algorithm for reporting a minimum vertex cover in the sliding-window
model with $\O{\eps^{-1}n\log^{2}n}$ space, matching the result of
\cite{Otniel16}. 

We can improve the approximation ratio by using the algorithm of Crouch
et al. \cite{Crouch2013}, which achieves $\left(3+\eps\right)$-approximation
to maximum-matching, using the same space complexity. Their algorithm
maintains a greedy matching in various buckets, such that the difference
between adjacent buckets is not too large. Specifically, for any adjacent
buckets $B_{i}$ and $B_{i+1}$ it holds that $2\widehat{m}\left(B_{i+1}\right)\ge\left(1-\eps\right)\widehat{m}\left(B_{i}\right)$.
By an easy modification to their algorithm, just outputting the greedy
matching on the bucket $B_{1}$ instead of the bucket $B_{2}$, it
holds that $V\left(\widehat{M}\left(B_{1}\right)\right)$ is a vertex
cover (of $B_{1}\supseteq W$) at most $\left(6+\eps\right)$-factor
larger than the minimum vertex cover on the active window $W$. Note
that the algorithm of \cite{Crouch2013} and the algorithm of \cite{Otniel16}
are essentially the same, the only difference is that \cite{Otniel16}
stores the vertices instead of the edges.

We can improve even further, to $\left(4+\eps\right)$-approximation,
by leveraging the fact that the greedy-matching size is $\left(2,2\right)$-almost-smooth.
Indeed, let us use the algorithm of Crouch et al. \cite{Crouch2013},
but output $V\left(\widehat{M}\left(B_{1}\right)\right)$ instead
of $B_{2}$. At the end of the stream we are guaranteed that $2\widehat{m}\left(B_{2}\right)\ge\left(1-\eps\right)\widehat{m}\left(B_{1}\right)$,
because the greedy matching size is $\left(2,2\right)$-almost-smooth.
Since the minimum vertex cover is monotone and $W\sseq B_{1}$ $VC\left(W\right)\le VC\left(B_{1}\right)\le\left|V\left(\widehat{M}\left(B_{1}\right)\right)\right|=2\cdot\widehat{m}\left(B_{1}\right)$.
Note that $V\left(\widehat{M}\left(B_{1}\right)\right)$ is indeed
a vertex cover on the active window $W$, since it is a vertex cover
on $B_{1}$. Additionally, $VC\left(W\right)\ge VC\left(B_{2}\right)\ge\widehat{m}\left(B_{2}\right)$,
since $VC\left(\cdot\right)$ is monotone, $B_{2}\sseq W$ and the
minimum vertex cover size is at least the size of any matching. For
$\eps<\frac{1}{2}$ it holds that $\frac{1}{1-\eps}\le1+2\eps$, and
we obtain
\[
\left|V\left(\widehat{M}\left(B_{1}\right)\right)\right|=2\cdot\widehat{m}\left(B_{1}\right)\le4\left(1+2\eps\right)\cdot\widehat{m}\left(B_{2}\right)\le4\left(1+2\eps\right)\cdot VC\left(W\right).
\]
We conclude that the output $V\left(\widehat{M}\left(B_{1}\right)\right)$
is a vertex cover on the active window $W$ and it is at most a factor
$4\left(1+2\eps\right)$ larger then $VC\left(W\right)$.
\end{proof}

\paragraph{Acknowledgments}

We thank Oded Goldreich and Shahar Dobzinski for suggesting to generalize
Corollary \ref{cor:MM=000026VC2AS} to subadditive functions as presented
in Lemma \ref{lem:subadditive2AS}. We also thank Sepehr Assadi and
Krishna Chaitanya for pointing out an error in our earlier proof of
Lemma \ref{lem:greedy-matching_(2,2)-as}.

\bibliographystyle{bib/alphaurlinit}
\addcontentsline{toc}{section}{\refname}\bibliography{bib/mybib}

\newpage{}

\appendix

\section{\label{appendix:Smooth-Histogram-Framework}Smooth-Histogram Framework}

The smooth-histogram technique presented by Braverman and Ostrovsky
\cite{Braverman:2007:SHS:1333875.1334202} is one of only two general
techniques for adapting insertion-only algorithms to the sliding-window
model. The other one is an earlier technique called exponential histogram,
due to Datar et al. \cite{StreamStatistics}. The approach of \cite{Braverman:2007:SHS:1333875.1334202}
is to maintain several instances of an insertion-only algorithm on
different suffixes of the stream, such that at every point in time,
the algorithm can output an approximation of $f$ on $W$. They showed
that for a large family of functions, which they called smooth (see
Remark \ref{rem:(a,b)-smooth}), this approach yields a good approximation
algorithm for the sliding-window model. Their technique yields no
results for functions that are not smooth, specifically graph problems
such as maximum-matching size. We extend this framework to a much
broader family of functions that we call almost-smooth.

More precisely, assume there is an algorithm $\Lambda$ that $C$-approximates
a left-monotone non-decreasing function $f$ in the insertion-only
model. The smooth-histogram framework (for the sliding-window model)
maintains $k=\O{\eps^{-1}\log w}$ instances of $\Lambda$. Each instance
$\Lambda_{i}$ processes the stream from some initial point in time
until the end of the stream (or until it is discarded), i.e., it corresponds
to some suffix of the stream, referred to as a $\emph{bucket}$. The
bucket corresponding to $\Lambda_{i}$ is denoted by $B_{i}$, and
we denote by $\Lambda_{i}\left(B_{i}\right)$ the value of instance
$\Lambda_{i}$ on the stream $B_{i}$. These buckets will satisfy
the invariant $B_{1}\supseteq W\supsetneq B_{2}\supsetneq B_{3}\supsetneq\cdots\supsetneq B_{k}$,
where $W$ is the active window. In order to use only a small amount
of space, whenever two nonadjacent instances have ``close'' values,
all instances between them will be deleted. Instances $\Lambda_{i}$
and $\Lambda_{j}$, for $j>i$, are considered close if $\Lambda_{i}\left(B_{i}\right)$
and $\Lambda_{j}\left(B_{j}\right)$ are within factor $1+\eps$ of
each other. At each step of receiving a new item from the stream,
the sliding-window algorithm updates all the instances, creates a
new instance $\Lambda_{k+1}$, which initially contains only the new
item, deletes all unnecessary instances, as explained above, and lastly
renumbers the buckets (consecutively starting from $1$). For a more
elaborate description see Algorithm \ref{alg:Update-and-query} in
Section \ref{sec:SW-alg}.

We show that applying this approach to almost-smooth functions yields
good approximation algorithms while only storing a small number of
buckets. Intuitively, $\Lambda_{1}\left(B_{1}\right)$ approximates
$f\left(W\right)$ (up to some factor that depends on $d,C$ and $\eps$)
because $\Lambda_{1}\left(B_{1}\right)$ and $\Lambda_{2}\left(B_{2}\right)$
are close up to some factor (since deleted buckets have close value
to nearby buckets by the almost-smoothness of $f$) and thus they
bound $\Lambda\left(W\right)$. Therefore, by deleting buckets between
close instances we ensure that the number of buckets is small while
the approximation ratio is roughly $dC^{2}$.

Braverman and Ostrovsky \cite{Braverman:2007:SHS:1333875.1334202}
proved that all $\ell_{p}$-norms, for $p>0$, are smooth (in our
terminology it means almost-smoothness parameter $d=1$) and consequently
obtained algorithms that $\left(1+\eps\right)$-approximate these
norms in the sliding-window model, with an overhead (relative to insertion-only
algorithms) of roughly factor $\O{\eps^{-1}\log w}$ in the space
complexity.

While they analyze their framework only for smooth functions (such
as $\ell_{p}$-norms) our analysis considers the larger family of
$d$-almost-smooth function (which includes all the subadditive functions).
Many graph problems are $2$-almost-smooth (as they are subadditive)
but not smooth, and thus do not fit their analysis. Additionally,
they do not consider functions that have only a $C$-approximation
algorithms in the insertion-only model for constant $C>1$. We analyze
the dependence on $C$ and present here the first sliding-window algorithms
for such functions.

We point out that previously studied techniques for estimating ``weakly
superadditive'' functions, such as $\ell_{p}$ norms for $p\in\left[1,2\right]$
(e.g. the exponential-histogram technique of Datar et al. \cite[Section 6]{StreamStatistics}),
are not relevant to our study of subadditive functions. For example,
maximum-matching is subadditive but not weakly superadditive.

\section{\label{appendix:Proof-of-Main-Theorem}Proof of Theorem \ref{thm:Almost-Smooth}}

To avoid dependence on the length of the entire stream (for the success
probability) we make use of a general observation due to Braverman
regarding algorithms for the sliding-window model. Intuitively, it
says that without loss of generality, the entire stream can be assumed
to have length at most twice the size of the window.
\begin{claim}
\label{claim:SW_bound_by_2w} Every sliding-window algorithm $\Lambda$
can be modified such that it will not depend on the length of the
entire stream, but only depend on at most $2w$ last items from the
stream, while using at most a factor $2$ more space.
\end{claim}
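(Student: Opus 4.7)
The plan is to build a scheduling wrapper around $\Lambda$ that maintains only two parallel copies with staggered restarts, so that every query can be answered by a copy which has been running for at least $w$ and at most $2w$ steps. This reduces every quantity in the analysis of $\Lambda$ that otherwise scales with the total stream length to one that scales with $2w$.

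Concretely, I would maintain two instances $\Lambda_A$ and $\Lambda_B$. Initialize $\Lambda_A$ at time $0$ and $\Lambda_B$ at time $w$; then discard and re-initialize $\Lambda_A$ at every time of the form $2kw$ and $\Lambda_B$ at every time of the form $(2k+1)w$. Feed each arriving item to both live instances. At query time $t \ge w$, exactly one of the two instances was most recently (re)started at some time $t' \in [t-2w,\, t-w]$; I would output the value of that instance. Since $\Lambda$ is a sliding-window algorithm with window size $w$, and the length-$w$ suffix of what this instance has seen is exactly $W_t$, its output equals $\Lambda(W_t)$. For $t < w$ the instance $\Lambda_A$ has seen the whole stream so far, which already coincides with the active window.

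The space bound is then immediate: at any moment we store at most two live copies of $\Lambda$, plus $\O{\log w}$ bits for the reset counter modulo $2w$, yielding the claimed factor-$2$ overhead. Moreover, every live instance processes at most $2w$ consecutive items between resets, so in any downstream analysis (for instance, a union bound over per-item failure events or a calculation of expected storage in terms of stream length) the length parameter of $\Lambda$ can be replaced by $2w$ rather than by the total stream length. There is no genuine technical obstacle here --- the construction is a purely operational wrapper that does not touch the internals of $\Lambda$; the only detail requiring care is the staggered schedule, which is what guarantees that the query-answering instance has always been running for at least $w$ steps so that its own sliding window is well defined and coincides with $W_t$.
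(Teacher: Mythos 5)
Your construction is correct and is essentially identical to the paper's own: the paper partitions the stream into length-$w$ segments, launches a fresh instance of $\Lambda$ at the start of each segment, lets it run for at most $2w$ updates, and answers queries from the instance started one segment ago, keeping only the last two live instances --- which is exactly your two alternating, staggered-by-$w$ copies $\Lambda_A,\Lambda_B$. The only cosmetic difference is that you explicitly account for the $\O{\log w}$ bits of scheduling state, which the paper leaves implicit.
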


\begin{proof}
[Proof of Claim \ref{claim:SW_bound_by_2w}] To avoid dependence on
the length of the stream $N$, and instead be dependent only on the
length of the window $w$, we can argue as follows: partition the
entire stream $D$ to segments $D_{1},D_{2},\ldots,D_{t}$, of length
$w$ each, where $t=\left\lceil \frac{N}{w}\right\rceil $ (except
maybe the last segment $D_{t}$, which is of length $0<N-\left(t-1\right)w\le w$).
At each segment $D_{i}$ start a new instance of algorithm $\Lambda$,
and keep running it during the next segment as well, for at most $2w$
updates in total (for each instance of $\Lambda$). At any point in
time, to answer a query the algorithm queries the instance of $\Lambda$
on the penultimate segment, which corresponds to a suffix of the stream
of length at least $w$, and thus contains the entire active window.
Thus, at each point in time it is enough to store only the two instances
of algorithm $\Lambda$ corresponding to the last two segments, increasing
the storage requirement only by a factor of $2$.
\end{proof}
\enlargethispage{4\baselineskip}
\begin{proof}
[Proof of Theorem \ref{thm:Almost-Smooth}] Assume, without loss of
generality, that the length of the entire stream is at most $2w$,
as explained in Claim \ref{claim:SW_bound_by_2w}. Denote by $\Lambda(X)$
the output of algorithm $\Lambda$ when run on the stream $X$. Assume
that $\Lambda$ has $\frac{\varepsilon\delt}{2w\log w}$ failure probability
and $\eps$ is the accuracy parameter, i.e., it $\left(\left(1+\eps\right)C,\frac{\eps\delt}{2w\log w}\right)$-approximates
$f$. Recall that we use the term ``bucket'' to refer to a suffix
of the stream. Our algorithm maintains (not explicitly) $k=O\left(\varepsilon^{-1}\log n\right)$
``buckets'' $B_{1},\ldots,B_{k}$. At all points in time, these
buckets will satisfy the invariant $B_{1}\supseteq W\supsetneq B_{2}\supsetneq B_{3}\supsetneq\cdots\supsetneq B_{k}$,
where $W$ is the active window. For each bucket $B_{i}$ the algorithm
maintains an instance of $\Lambda$, denoted by $\Lambda_{i}$. In
order to use only a small amount of space, whenever two nonadjacent
buckets have similar value according to $\Lambda$ we will delete
all buckets between them. For ease of exposition, the algorithm will
be defined using these buckets, and later we explain how to not actually
store the buckets themselves. In each step of receiving a new item
$a$ from the stream, the algorithm updates the current buckets $B_{1},\ldots,B_{k}$
and the corresponding instances $\Lambda_{1}(B_{1}),\ldots,\Lambda_{k}(B_{k})$
in the following way. \\

\begin{algorithm}[H]
\caption{Almost-Smooth Histogram for Sliding-Window Streaming} \label{alg:Update-and-query}
\DontPrintSemicolon

\def\NoNumber#1{{\def\alglinenumber##1{}\State #1}\addtocounter{ALG@line}{-1}}

\BlankLine
{\nonl \bf  Initialization Procedure:}

$k \letdef 0$ (no buckets exist yet) \;

\BlankLine \BlankLine

{\nonl \bf Update Procedure:}

open a new bucket $ B_{k+1} \letdef \left\{ a\right\} $, and start a new instance $ \Lambda_{k+1} $ on this bucket \; 

add $a$ to every bucket $B_{i}$, $i \in [k]$, and update instance $ \Lambda_{i} $ accordingly \;

$ i  \letdef  1 $ \;

\While {$ i \le k-2 $}{

	find the largest $j \ge i$ such that $\Lambda_{j}(B_{j})>\left(1-\varepsilon\right)\Lambda_{i}(B_{i})$ \;
	
foreach $ t = i+1, \ldots ,j-1$ discard bucket $B_{t}$ and its associated instance $\Lambda_{t}$ \;
	$ i  \letdef  \min \left\{ j,i+1\right\} $ \;
}

\If{$W \sseq B_{2}$}{
	discard bucket $B_{1}$ and its associated instance $\Lambda_{1}$ \;
}

let $k$ be the number of remaining buckets, and renumber the buckets and their associated instances (keeping their order) to $ B_{1}, \ldots ,B_{k} $ \;

\BlankLine \BlankLine

{\nonl \bf Query Procedure:}

\BlankLine

\If{$B_{1} = W$}{
	\Return $\Lambda_{1}(B_{1})$ \;
}

\Else{
	\Return $dcC\frac{\left(1+\eps\right)}{\left(1-\eps\right)^{2}}\cdot\Lambda_{2}(B_{2})$ \;
}

\end{algorithm}

\BlankLine

Since $f$ is bounded by some polynomial in $w$, and $\Lambda_{i+2}(B_{i+2})\le\left(1-\varepsilon\right)\Lambda_{i}(B_{i})$
for every $1\le i\le k-2$ (as the ``unnecessary'' instances were
deleted in the process of updating), it follows that the number of
instances are bounded by $\O{\eps^{-1}\log w}$. Hence, the number
of times any instance of $\Lambda$ is invoked is at most $\O{\tfrac{1}{\eps}\log w}\cdot m$,
where $m$ is the length of the entire stream, which we assumed to
be bounded by $2w$. Since we set the failure probability to be $\frac{\eps\delt}{2w\log w}$
then by union bound the probability that any invocation of any instance
of $\Lambda$ fails is $\delt$, i.e., with probability $1-\delt$
every instance of $\Lambda$ succeeds every time it is invoked. Thus,
from now on we assume that every instance of $\Lambda$ succeeds every
time, i.e., it $\left(1+\eps\right)C$-approximates $f$ on the corresponding
bucket whenever it is invoked.

Now, let us explain how to achieve $\O{\eps^{-1}\log w\cdot\left(s\left(\eps,\frac{\eps\delt}{2w\log w}\right)+\log w\right)}$
space complexity. We can directly update the instances $\Lambda_{1},\ldots,\Lambda_{k}$,
without storing the buckets $B_{1},\ldots,B_{k}$ explicitly, hence
we only need to maintain the storage that algorithm $\Lambda$ requires.
Additional to the space required by instances $\Lambda_{1},\ldots,\Lambda_{k}$,
we need to store a counter $c_{i}$ for every instance $\Lambda_{i}$,
indicating its initialization time (initialized to $c_{i}=1$ and
incremented each time a new item arrives), such that we can perform
the last step of the algorithm, by comparing the counter of bucket
$B_{2}$ to the number $w$ (which is the size of the active window
$W$). This way, for each bucket $B_{i}$ we store $s\left(\eps,\frac{\eps\delt}{2w\log w}\right)+\log w$
bits. As we have seen previously, the number of instances of $\Lambda$
are bounded by $\O{\eps^{-1}\log w}$. Therefore, the total number
of bits used by the algorithm is $\O{\eps^{-1}\log w\cdot\left(s\left(\eps,\frac{\eps\delt}{2w\log w}\right)+\log w\right)}$,
as claimed.

For the approximation ratio denote by $\tilde{\Lambda}$ the output
of the algorithm, and note that either $\tilde{\Lambda}=\Lambda_{1}(B_{1})$
in the case $B_{1}=W$, or $\tilde{\Lambda}=dcC\frac{\left(1+\eps\right)}{\left(1-\eps\right)^{2}}\cdot\Lambda_{2}(B_{2})$
otherwise. If $B_{1}=W$ then $\tilde{\Lambda}=\Lambda(B_{1})$ is
obviously $\left(1+\eps\right)C$-approximates $f$ on $B_{1}=W$.
Otherwise $B_{1}\supsetneq W\supsetneq B_{2}$, which means that at
some earlier point in time, denoted by $t^{*}$, the algorithm had
deleted some buckets between buckets $B_{1}$ and $B_{2}$ to make
them adjacent (for the first time). For $i\in\left\{ 1,2\right\} $
denote by $B_{i}^{\prime}$ the bucket $B_{i}$ at the time $t^{*}$.
Let $D$ be the suffix of the stream starting at time $t^{*}$, and
observe that $B_{1}=B_{1}^{\prime}D$ and $B_{2}=B_{2}^{\prime}D$.
At time $t^{*}$ we had $\left(1-\eps\right)\Lambda(B_{1}^{\prime})<\Lambda(B_{2}^{\prime})$,
which implies
\[
\left(1-\eps\right)f(B_{1}^{\prime})\le\left(1-\eps\right)\Lambda(B_{1}^{\prime})<\Lambda(B_{2}^{\prime})\le\left(1+\eps\right)C\cdot f(B_{2}^{\prime}),
\]
namely $\frac{\left(1-\eps\right)}{\left(1+\eps\right)C}f(B_{1}^{\prime})\le f(B_{2}^{\prime})$.
Note that we used here the formulation of Remark \ref{rem:rand_alg}.
Since $f$ is $\left(c,d\right)$-almost-smooth, at the end of the
stream we have $\frac{1}{d}\cdot\frac{\left(1-\eps\right)}{\left(1+\eps\right)C}f(B_{1})\le f(B_{2})$.
Now, by monotonicity $\frac{1}{c}\cdot f(B_{2})\le f\left(W\right)\le c\cdot f(B_{1})$,
and altogether
\[
\frac{1}{cC\left(1+\eps\right)}\Lambda(B_{2})\le\frac{1}{c}\cdot f(B_{2})\le f(W)\le c\cdot f(B_{1})\le cdC\cdot\frac{\left(1+\eps\right)}{\left(1-\eps\right)}f(B_{2})\le cdC\cdot\frac{\left(1+\eps\right)}{\left(1-\eps\right)^{2}}\Lambda(B_{2}).
\]
Since $\frac{\left(1+\eps\right)^{2}}{\left(1-\eps\right)^{2}}\le1+20\eps$
for $\eps\le\frac{1}{2}$, we conclude that at the end of the stream
the output of the algorithm $\tilde{\Lambda}=dcC\frac{\left(1+\eps\right)}{\left(1-\eps\right)^{2}}\cdot\Lambda(B_{2})$
approximates $f(W)$ as claimed.
\end{proof}

\end{document}